\def\BibTeX{{\rm B\kern-.05em{\sc i\kern-.025em b}\kern-.08em
    T\kern-.1667em\lower.7ex\hbox{E}\kern-.125emX}}
\newtheorem{theorem}{Theorem} 
\def\BibTeX{{\rm B\kern-.05em{\sc i\kern-.025em b}\kern-.08em
		T\kern-.1667em\lower.7ex\hbox{E}\kern-.125emX}}
\newcommand{\argmax}{\operatorname{arg\,max}}
\newcommand{\argmin}{\operatorname{arg\,min}}
\begin{document}
\title{Low Complexity Detector for XL-MIMO Uplink: A Cross Splitting Based Information Geometry Approach}
\author{\IEEEauthorblockN{Wenjun Zhang, An-An Lu\textsuperscript{*}, \textit{Member, IEEE} and Xiqi Gao, \textit{Fellow, IEEE}}
\thanks{Wenjun Zhang, An-An Lu, and Xiqi Gao are with the National Mobile Communications Research Laboratory, Southeast University, Nanjing
	210096 China, and also with Purple Mountain Laboratories, Nanjing 211111, China (e-mail: 220231059@seu.edu.cn; aalu@seu.edu.cn;
	xqgao@seu.edu.cn).

}}


\maketitle

\begin{abstract}
In this paper, we propose the cross splitting based information geometry approach (CS-IGA), a novel and low complexity iterative detector for uplink signal recovery in extra-large-scale MIMO (XL-MIMO) systems. Conventional iterative detectors, such as  the approximate message passing (AMP) algorithm and the traditional information geometry algorithm (IGA), suffer from a per iteration complexity that scales with the number of base station (BS) antennas, creating a computational bottleneck. To overcome this, CS-IGA introduces a novel cross matrix splitting of the natural parameter in the \emph{a posteriori} distribution. This factorization allows the iterative detection based on the  matched filter, which reduces per iteration computational complexity. Furthermore, we extend this framework to nonlinear detection and propose nonlinear CS-IGA (NCS-IGA) by seamlessly embedding discrete constellation constraints, enabling symbol-wise processing without external interference cancellation loops. Comprehensive simulations under realistic channel conditions demonstrate that CS-IGA matches or surpasses the bit error rate (BER) performance of Bayes optimal AMP and IGA for both linear and nonlinear detection, while achieving this with fewer iterations and a substantially lower computational cost. These results establish CS-IGA as a practical and powerful solution for high-throughput signal detection in next generation XL-MIMO systems.
\end{abstract}

\begin{IEEEkeywords}
XL-MIMO, signal detection, cross-splitting, low-complexity, information geometry.
\end{IEEEkeywords}

\section{Introduction}
\IEEEPARstart{M}{assive} multiple-input multiple-output (MIMO) has emerged as a cornerstone for achieving the spectral and energy efficiency targets of the fifth-generation (5G) cellular networks \cite{Marzetta2010,mimo3,mimo1,mimo2}. 
Building on this success, its evolved version extra-large-scale MIMO (XL-MIMO) envisions antenna apertures extending beyond tens of meters to provide unprecedented spatial resolution, capacity, and reliability in the sixth-generation (6G) wireless communications systems \cite{xlmimo,xlmimo2,xlmimo3}. In this context, accurate and efficient signal detection becomes increasingly critical. This paper addresses the signal detection challenge in XL‑MIMO deployments.

Concurrently, 3GPP has identified frequencies above 6 GHz (the U6G band) as a primary allocation for future wireless access \cite{3GPP38300}. In this band, the enormous spatial degrees of freedom provided by extra‑large antenna apertures facilitate precise user separation and interference suppression. While XL-MIMO will employ an extra-large array with hundreds or thousands of antennas and transmit up to hundreds of data streams \cite{hundred_streams}, the vast scale of these arrays and streams dramatically increases the computational demands of uplink multi-user detection. While optimal detectors such as maximum \textit{a posteriori} (MAP) and maximum likelihood (ML) offer the lowest error rates, they entail an exhaustive search over the joint symbol space whose complexity grows exponentially with the number of users \cite{MLMAP}. As a result, MAP and ML detection become impractical in XL‑MIMO systems.


Linear receivers, particularly the zero forcing (ZF) and the linear minimum mean square error (LMMSE), offer a favorable trade-off between performance and complexity by reducing detection to matrix-vector multiplications plus a single matrix inversion \cite{LinearDet}. However, as the number of antennas and users increases, even one inversion can become computationally intensive. To alleviate this burden, a variety of low complexity inversion approximations have been developed. For instance, the Neumann series expansion was applied in \cite{NS} to approximate the inversion, while Newton’s iterative method offering faster convergence was explored in \cite{NI}. The Jacobi iteration, introduced in \cite{Jaccobi}, further reduces per iteration cost, and a Gauss-Seidel based detector in \cite{Gauss-seidel} achieves a similar complexity with an improved convergence rate. Additionally, the conjugate gradient (CG) algorithm has been shown to approach near optimal performance when the base-station-to-user-antenna ratio is high \cite{CG}.

For nonlinear detection, Bayesian inference methods, such as belief propagation (BP) \cite{BP} and expectation propagation (EP) \cite{EP} leverage prior distributions on both channels and symbols to more effectively suppress multi-user interference, often yielding superior performance over linear receivers when high order constellations are used. Moreover, variational Bayes frameworks have been employed for joint detection and decoding \cite{Beal2003}, trading off enhanced error rate performance for additional iterative complexity.

More recently, the approximate message passing (AMP) algorithm has gained attention as an attractive detection framework, offering low per iteration complexity and near optimal performance under certain random matrix assumptions \cite{AMP}. AMP variants, such as the orthogonal AMP (OAMP) \cite{OAMP}, the vector AMP (VAMP) \cite{VAMP}, and the memory AMP (MAMP) \cite{MAMP}, extend the original algorithm to cope with ill conditioned channel matrices or discrete symbol priors, albeit with a modest increase in computational load. In many practical systems, user scheduling precedes detection to ensure well conditioned channel gains. Under these circumstances, standard AMP strikes an excellent balance between complexity and performance, making it particularly well suited for XL‑MIMO signal detection.

Information geometry (IG) offers an alternative framework that has been recently applied in wireless communications. For example, the decoding algorithms for turbo and low density parity check (LDPC) codes can be derived from an IG perspective, with analyses of their equilibrium points and error behavior in \cite{IGForLDPC}. The information geometry approach (IGA) for channel estimation and symbol detection, developed in \cite{IGChEst} and \cite{IGSiDet}, matches the performance of AMP while providing a clear geometric interpretation for each update. Simplified IG variants, such as the simplified IGA (SIGA) for channel estimation \cite{SIGA}, and the approximate IGA (AIGA) demonstrating connections between IGA and AMP \cite{IGForAMP}, further unify message-passing and geometric viewpoints. More recently, an IGA formulation in \cite{ICIGA} reduces per-iteration complexity so that it scales with the number of users, rather than the number of BS antennas. However, its simplified form is the same as the classical Jacobi method, which converges typically slower than AMP in signal detection.

Although IG-based detectors and Bayes optimal AMP variants have great performance in massive MIMO, they incur high per-iteration complexities in XL-MIMO because both the antenna array and data streams become extremely large.
Furthermore, the shift toward very high carrier frequencies and ultra‑dense user deployments in 6G make algorithms that scale more gracefully with system dimensions more necessary. 
For near-field low-complexity detectors in XL-MIMO, the visibility region (VR) is utilized to develop a randomized Kaczmarz algorithm \cite{VR1} and partial zero-forcing \cite{VR2} for XL-MIMO multi-user detection. Nevertheless, these methods may lack applicability in far-field communication. As the visibility region expands, it inherently induces a rise in complexity, thereby posing challenges to the implementation of the methods.

Motivated by the dual goals of rapid convergence and minimizing whole array detection complexity in far field, we propose a novel information geometry based detector that employs a cross splitting of a matrix in the \emph{a posteriori} probability distributions.  Our main contributions are:
\begin{itemize}
	\item  We propose a novel cross splitting framework to derive the information geometry method for detection, so that the complexity per iteration grows only linearly with the number of users.
	
	\item  We derive the cross splitting based information geometry approach (CS-IGA) whose convergence rate rivals that of Bayes optimal AMP, yet whose computational cost per iteration is substantially lower.  We further prove that the algorithm  converges to the MMSE detector.
	
	\item For discrete symbol priors, we propose a nonlinear CS‑IGA variant (NCS‑IGA) that reproduces the convergence behavior of Bayes optimal AMP while reducing per iteration cost.
	
\end{itemize}

The remainder of this paper is structured as the following parts. Section \ref{Information Geometry} formulates the XL-MIMO detection problem and reviews essential aspects of information geometry. In Section \ref{CS-IGA Linear} presents the cross splitting construction, derives the CS‑IGA algorithm, and analyzes its convergence and fixed point properties. Section \ref{CS_IGA Non-Linear} extends CS-IGA to the nonlinear detection setting. Simulation results are shown in Section \ref{sim}, and Section \ref{conclusion} concludes the paper.

\textbf{Notations}: We use boldface upper‑case letters (e.g., $\mathbf{A}$, $\mathbf{H}$) for matrices and boldface lower‑case letters (e.g., $\mathbf{x}$, $\mathbf{y}$) for column vectors. The superscript $(\cdot)^*$ denotes complex conjugation and $(\cdot)^H$ denotes Hermitian transpose. For a matrix $\mathbf{H}$, its $(m,n)$ entry is $H_{mn}$. The identity matrix of size $N$ is $\mathbf{I}_N$. For a vector $\pmb{a} = [a_1, a_2, \dots, a_N]^T$, we write $a_n$ for its $n$-th element and $\bar{\pmb{a}}_n$ for the $(N-1)$-dimensional vector obtained by removing $a_n$. The expectation and variance operators are $\mathbb{E}[\cdot]$ and $\mathbb{V}[\cdot]$, respectively. The symbol $\odot$ denotes element‑wise multiplication. Define $\mathcal{Z}_n\equiv\left\{0,1,\dots,N\right\}$ as the natural number set and $\mathcal{Z}_n^+\equiv\left\{1,2,\dots,N\right\}$ as the positive integer set. Denote the complex number field as $\mathbb{C}$ and the real number field as $\mathbb{R}$. $\mathbf{P}_{1n}$ denotes the permutation matrix that brings the $n$-th row to the first position and keep the row-order invariant.

\section{Problem Formulation and Information Geometry}\label{Information Geometry}

\subsection{System Model}
Consider the uplink of an XL-MIMO system in which a base station (BS) is equipped with \(M\) antennas and serves \(N\) single-antenna users. Let
\[
\mathbf{x} = [\,x_1,\,x_2,\,\dots,\,x_N]^T \in \mathbb{C}^{N}
\]
be the vector of coded and modulated transmit symbols, where each \(x_n\) is drawn from a unit-energy constellation, e.g., \(L\)-QAM, and satisfies \(\mathbb{E}[|x_n|^2]=1\) for \(n\in\mathcal{Z}_N^+\).

The received signal at the BS is modeled as
\begin{equation}
	\label{eq:system_model}
	\mathbf{y} = \mathbf{H}\mathbf{x} + \mathbf{z},
\end{equation}
where \(\mathbf{y}\in\mathbb{C}^{M}\) denotes the received signal vector, \(\mathbf{H}\in\mathbb{C}^{M\times N}\) represents the uplink channel matrix, and \(\mathbf{z}\sim\mathcal{CN}(\mathbf{0},\sigma_z^2\mathbf{I}_M)\) is a circularly symmetric complex Gaussian noise. In this paper, we assume that perfect channel state information (CSI) is available at the BS.

\subsection{Problem Statement}

\subsubsection{Linear Detection Scheme}
We assume that the transmitted symbols \(\{x_n\}\) are mutually independent and uncorrelated with the noise vector \(\mathbf{z}\). By adopting a Gaussian prior for \(\mathbf{x}\), the \emph{a posteriori} density becomes
\begin{equation}
	\label{eq:posterior}
	p(\mathbf{x}\mid\mathbf{y})
	 \propto 
	\exp \Bigl(-\tfrac{1}{\sigma_z^2}\|\mathbf{y}-\mathbf{H}\mathbf{x}\|^2  - \mathbf{x}^H\mathbf{R}_{x}\mathbf{x}\Bigr),
\end{equation}
where \(\mathbf{R}_{x}=\mathbb{E}[\mathbf{x}\mathbf{x}^H]\). Under the standard independent–unit‐variance assumption, we set \(\mathbf{R}_{x}=\mathbf{I}_N\). Then, the \emph{a posteriori} mean and variance are denoted as \cite{Kay1993}
\begin{subequations}
	\begin{align}
		\label{eq:posterior mean}
		\boldsymbol{\mu}_{\mathrm{LMMSE}}
		&= \bigl(\sigma_z^{-2}\mathbf{H}^H\mathbf{H} + \mathbf{I}_N\bigr)^{-1}
		\bigl(\sigma_z^{-2}\mathbf{H}^H\mathbf{y}\bigr),\\
		\label{eq:posterior variance}
		\boldsymbol{\Sigma}_{\mathrm{LMMSE}}
		&= \bigl(\sigma_z^{-2}\mathbf{H}^H\mathbf{H} + \mathbf{I}_N\bigr)^{-1}.
	\end{align}
\end{subequations}
They are equivalent to the LMMSE detection results. These estimates are supplied to a soft demodulator, whose outputs are subsequently decoded. 

Direct inversion of the \(N\times N\) matrix \((\sigma_z^{-2}\mathbf{H}^H\mathbf{H} + \mathbf{I}_N)\) requires  computational complexity of order \(\mathcal{O}(N^3)\),  which is prohibitive for practical XL‑MIMO systems where the number of users might up to more than $100$.

\subsubsection{Non‑Linear Detection Scheme}
Starting from the received signal in \eqref{eq:system_model}, when \(\mathbf{x}\) follows a discrete prior rather than Gaussian, the \emph{a posteriori} distribution is written as
\begin{IEEEeqnarray}{Cl}
	\label{eq:posterior non-liner}
	p(\mathbf{x}\mid\mathbf{y})
	&\propto
	p_{pr}(\mathbf{x})p(\mathbf{y}\mid\mathbf{x}) \notag\\
	&\propto\prod_{n=1}^{N}p_{pr,n}({x}_n)\exp\Bigl(-\frac{1}{\sigma_z^2}\|\mathbf{y}-\mathbf{H}\mathbf{x}\|^2 \Bigr),
\end{IEEEeqnarray}
where the prior factorizes as \(p_{pr}(\mathbf{x})=\prod_{n=1}^{N}p_{pr,n}(x_n)\). For a uniform \(L\)-point constellation \(\{x^{(\ell)}\}_{\ell=1}^L\), each prior probability is
\begin{IEEEeqnarray}{Cl}
	p_{pr,n}(x_n)\bigl|_{x_n = x^{(\ell)}} = \frac{1}{L}, 
	\quad
	\{x^{(\ell)}\}_{\ell=1}^L\in\mathbb{X}^{L}.
\end{IEEEeqnarray}
The MAP detector then selects
\begin{IEEEeqnarray}{Cl}
	\label{eq:MAP}
	\hat{\mathbf{x}}_{MAP} = \underset{{\mathbf{x}\in\mathbb{X}^L}}{\argmax}~p(\mathbf{x}\mid\mathbf{y}),
\end{IEEEeqnarray}
which minimizes the symbol error rate but entails an exhaustive search over \(\mathbb{X}^L\). Since this complexity grows exponentially with \(N\), exact MAP detection is infeasible for XL‑MIMO deployments.

\subsection{Definitions from Information Geometry}
In this subsection, we introduce the definitions of statistical manifold of complex Gaussians, the objective manifold, the $m$-projection and the auxiliary manifolds    that will be used in the derivation of the new IGA algorithm.


\subsubsection{Statistical Manifold of Complex Gaussians}
Let $\mathcal{M}=\bigl\{p(\mathbf{x};\boldsymbol{\theta},\boldsymbol{\Theta})\bigr\}$ denote the family of circularly symmetric complex Gaussian distributions over $\mathbf{x}\in\mathbb{C}^N$ with a full-rank $\boldsymbol{\Theta}$, defined as
\begin{IEEEeqnarray}{rCl}
	p(\mathbf{x};\boldsymbol{\theta},\boldsymbol{\Theta})
	&=& \exp\Bigl\{{\mathbf{x}^H\boldsymbol{\theta} + \boldsymbol{\theta}^H\mathbf{x}}
	\;+\;{\mathbf{x}^H\boldsymbol{\Theta}\,\mathbf{x}}
	\;-\;\psi\Bigr\},
	\label{eq:natparam}
\end{IEEEeqnarray}
where
\begin{IEEEeqnarray}{Cl}
	\psi(\boldsymbol{\theta},\boldsymbol{\Theta})
	&= \log\!\int_{\mathbb{C}^N}
	\exp\bigl\{\mathbf{x}^H\boldsymbol{\theta} + \boldsymbol{\theta}^H\mathbf{x}
	+ \mathbf{x}^H\boldsymbol{\Theta}\,\mathbf{x}\bigr\} \,d\mathbf{x} \notag\\
	&= N\log(\pi)-\log\det(-\boldsymbol{\Theta})-\boldsymbol{\theta}^H\boldsymbol{\Theta}^{-1}\boldsymbol{\theta},
\end{IEEEeqnarray}
is the normalization term and is also called free-energy.  Here $\boldsymbol{\Theta}\prec\mathbf{0}$ is negative‐definite and $\boldsymbol{\theta}\in\mathbb{C}^N$.  The pair $(\boldsymbol{\theta},\boldsymbol{\Theta})$ defines the \emph{e‐}coordinates of the point on $\mathcal{M}$ and are called natural parameters.

Correspondingly, one may adopt the \emph{m‐}coordinates in the affine coordinate system. Given the \emph{e-}coordinates, the \emph{m-}coordinates are calculated as
\begin{IEEEeqnarray}{rClCl}
	\boldsymbol{\mu} = \mathbb{E}[\mathbf{x}]
	&=& -\,\boldsymbol{\Theta}^{-1}\,\boldsymbol{\theta},
	\label{eq:mcoord_mean}\\
	\boldsymbol{\Sigma} + \boldsymbol{\mu}\boldsymbol{\mu}^H
	&=& \mathbb{E}[\mathbf{x}\mathbf{x}^H].
	\label{eq:mcoord_cov}
\end{IEEEeqnarray}
With abuse of notation, we also call $(\boldsymbol{\mu}, \boldsymbol{\Sigma})$ the the \emph{m-}coordinates.
The \emph{m-}coordinates are also called as the expectation parameters in information geometry. The dual function of $\phi(\boldsymbol{\theta},\boldsymbol{\Theta})$ is defined as
\begin{IEEEeqnarray}{Cl}
	\phi(\boldsymbol{\theta},\boldsymbol{\Theta}) = \int p(\mathbf{x};\boldsymbol{\theta},\boldsymbol{\Theta})\log p(\mathbf{x};\boldsymbol{\theta},\boldsymbol{\Theta})~d\mathbf{x},
\end{IEEEeqnarray}
which is also called negative entropy. By using the dual affine coordinate system, $\phi$ is rewritten as
\begin{IEEEeqnarray}{Cl}
	\phi(\boldsymbol{\mu},\boldsymbol{\Sigma})=-\log\det{\boldsymbol{\Sigma}}+c,
\end{IEEEeqnarray}
where $c$ is a constant value. It is noticeable that $\phi(\boldsymbol{\mu},\boldsymbol{\Sigma})$ is a convex function of $\boldsymbol{\mu}$ and $\boldsymbol{\Sigma}$, and induces the affine structure.

The duality between \emph{e‐} and \emph{m‐}coordinates is governed by the Legendre transform  and obtained from the derivatives of $\phi$ and $\psi$ as
\begin{IEEEeqnarray}{rCl}
	\boldsymbol{\theta} &=& \boldsymbol{\Sigma}^{-1}\,\boldsymbol{\mu},
	\quad
	\boldsymbol{\Theta} = -\,\boldsymbol{\Sigma}^{-1},
	\label{eq:legendre1}\\
	\boldsymbol{\mu} &=& -\,\boldsymbol{\Theta}^{-1}\,\boldsymbol{\theta},
	\quad
	\boldsymbol{\Sigma} = -\,\boldsymbol{\Theta}^{-1}.
	\label{eq:legendre2}
\end{IEEEeqnarray}

The \textit{a posteriori} density of $\mathbf{x}$ for linear detection in \eqref{eq:posterior} can be represented as 
\begin{IEEEeqnarray}{rCl}
	\label{eq:linear posterior}
	p(\mathbf{x}\mid\mathbf{y})
	&\propto& \exp \Bigl\{\mathbf{x}^H\underbrace{\sigma_z^{-2}\mathbf{H}^H\mathbf{y}}_{\boldsymbol{\theta}_{\mathrm{post}}}
	\;+\;\underbrace{\mathbf{y}^H\mathbf{H}\,\sigma_z^{-2}}_{\boldsymbol{\theta}_{\mathrm{post}}^H}\mathbf{x}\notag\\
	&\;+\;&\mathbf{x}^H\underbrace{[-(\sigma_z^{-2}\mathbf{H}^H\mathbf{H}+\mathbf{I})]}_{\boldsymbol{\Theta}_{\mathrm{post}}}\mathbf{x}\Bigr\}.
	\label{eq:posterior_mfld}
\end{IEEEeqnarray}
Hence, it is a point in $\mathcal{M}$ and its \emph{e‐}coordinates are
\begin{IEEEeqnarray}{rCl}
	\boldsymbol{\theta}_{\mathrm{post}}
	&=& \sigma_z^{-2}\,\mathbf{H}^H\mathbf{y},
	\quad
	\boldsymbol{\Theta}_{\mathrm{post}}
	= -\,\bigl(\sigma_z^{-2}\mathbf{H}^H\mathbf{H} + \mathbf{I}\bigr).
	\label{eq:post_nat}
\end{IEEEeqnarray}
By \eqref{eq:legendre2}, the exact \emph{a posteriori} mean $\boldsymbol{\mu}_{\mathrm{post}}$ and covariance $\boldsymbol{\Sigma}_{\mathrm{post}}$ follow as
\begin{IEEEeqnarray}{rCl}
	\boldsymbol{\mu}_{\mathrm{post}}
	&=& -\,\boldsymbol{\Theta}_{\mathrm{post}}^{-1}\,\boldsymbol{\theta}_{\mathrm{post}},
	\quad
	\boldsymbol{\Sigma}_{\mathrm{post}}
	= -\,\boldsymbol{\Theta}_{\mathrm{post}}^{-1} 
	\label{eq:post_mcoord}
\end{IEEEeqnarray}
which is the same as in \eqref{eq:posterior mean} and \eqref{eq:posterior variance}. 

\subsubsection{Objective Manifold of Complex Gaussians}
To mitigate the inversion complexity, we define the {objective manifold} as
\begin{IEEEeqnarray}{rCl}
	\mathcal{M}_0 
	&=& \bigl\{\,p(\mathbf{x};\boldsymbol{\theta}_0,\boldsymbol{\Theta}_0)\bigr\},
	\label{eq:obm_def}
\end{IEEEeqnarray}
where $\boldsymbol{\Theta}_0$ is diagonal so that inversion complexity reduces to $O(N)$.  Its \emph{m‐}coordinates are
\begin{IEEEeqnarray}{rCl}
	\boldsymbol{\mu}_0 = -\,\boldsymbol{\Theta}_0^{-1}\,\boldsymbol{\theta}_0,
	\quad
	\boldsymbol{\Sigma}_0 = -\,\boldsymbol{\Theta}_0^{-1}.
	\label{eq:obm_mcoord}
\end{IEEEeqnarray}

Let $(\boldsymbol{\theta}_{\mathrm{post}},\boldsymbol{\Theta}_{\mathrm{post}})$ be the true \emph{a posteriori} $e$-coordinates in the linear detection case.  Our goal is to find a proper point with $e$-coordinate $(\boldsymbol{\theta}_0,\boldsymbol{\Theta}_0)$ in $\mathcal{M}_0$ to ensure its $m$-coordinate can approximate the \emph{a posteriori} mean and variance, \textit{i.e.},
\begin{IEEEeqnarray}{Cl}
	\label{eq:approximation_OBM}
	\boldsymbol{\mu}_0 \approx \boldsymbol{\mu}_{\mathrm{post}},
	\quad
	\mathbf{I}\odot \boldsymbol{\Sigma}_0  \approx  \mathbf{I}\odot \boldsymbol{\Sigma}_{\mathrm{post}} 
\end{IEEEeqnarray}
or even 
\begin{IEEEeqnarray}{Cl}
	\label{eq:approximation_OBM}
	\boldsymbol{\mu}_0 = \boldsymbol{\mu}_{\mathrm{post}},
	\quad
	\mathbf{I}\odot \boldsymbol{\Sigma}_0  = \mathbf{I}\odot \boldsymbol{\Sigma}_{\mathrm{post}}.
\end{IEEEeqnarray}
For nonlinear detection case, the definition of the objective manifold remains the same, only the definitions of  $\boldsymbol{\mu}_{\mathrm{post}}$ and $\boldsymbol{\Sigma}_{\mathrm{post}}$ change.

\subsubsection{Kullback–Leibler (KL) Divergence and {m-}Projection}
Denote two points in a manifold $\mathcal{M}$ as $P:p(\mathbf{x};\boldsymbol{\theta}_1,\boldsymbol{\Theta}_1)$  and $Q:p(\mathbf{x};\boldsymbol{\theta}_2,\boldsymbol{\Theta}_2)$ . The KL divergence from $P$ to $Q$ is defined as
\begin{IEEEeqnarray}{Cl}
	\mathcal{D}_{\mathrm{KL}}(Q;P) = \mathcal{D}_{\mathrm{KL}}(p(\mathbf{x};\boldsymbol{\theta}_2,\boldsymbol{\Theta}_2);p(\mathbf{x};\boldsymbol{\theta}_1,\boldsymbol{\Theta}_1)),
\end{IEEEeqnarray}
which can be rewritten in the dual affine coordinate systems as \cite{amari}
\begin{IEEEeqnarray}{Cl}
	\mathcal{D}_{\mathrm{KL}}(Q;P)&= \phi(\boldsymbol{\mu}_2,\boldsymbol{\Sigma}_2)+\psi(\boldsymbol{\theta}_1,\boldsymbol{\Theta}_1)-\boldsymbol{\mu}_2^H\boldsymbol{\theta}_1 \notag \\
	&-~\boldsymbol{\theta}_1^H\boldsymbol{\mu}_2-\mathrm{tr}((\boldsymbol{\Sigma}_2+\boldsymbol{\mu}_2\boldsymbol{\mu}_2^H)\boldsymbol{\Theta}_1).
\end{IEEEeqnarray}

Before introducing the \emph{m}-projection, we first define \emph{e-}flat and \emph{m-}flat submanifolds. A submanifold $\mathcal{M}_s \subset \mathcal{M}$ is \emph{e-}flat if it is defined by a set of linear constraints in the \emph{e-}coordinates $(\boldsymbol{\theta}, \boldsymbol{\Theta})$. Likewise, $\mathcal{M}_s$ is \emph{m-}flat if it is defined by linear constraints in the \emph{m-}coordinates $(\boldsymbol{\mu}, \boldsymbol{\Sigma})$.

The projection of a distribution onto an \emph{e-}flat submanifold is called an \emph{m-}projection, since it is realized by a linear operation in the dual affine (\emph{m-}) coordinate system. Let $P \in \mathcal{M}_s$ and $Q \notin \mathcal{M}_s$. The \emph{m-}projection of $Q$ onto $\mathcal{M}_s$ is given by
\begin{IEEEeqnarray}{Cl}
	p(\mathbf{x};\boldsymbol{\theta}_2^0,\boldsymbol{\Theta}_2^0) &= \Pi_{\mathcal{M}_s}^m\left\{p(\mathbf{x};\boldsymbol{\theta}_2,\boldsymbol{\Theta}_2)\right\} \notag\\
	&= \underset{p(\mathbf{x};\boldsymbol{\theta}_1,\boldsymbol{\Theta}_1)\in\mathcal{M}_s}{\argmin}\mathcal{D}_{\mathrm{KL}}(Q;P).
\end{IEEEeqnarray}

Now consider the case where $\mathcal{M}_s$ is the objective manifold, i.e., $\mathcal{M}_s = \mathcal{M}_0$, in which case the precision matrix $\boldsymbol{\Theta}_1$ is constrained to be diagonal. The partial derivative of the KL divergence with respect to the natrual parameters of the \emph{m-}projection point is then given by
\begin{IEEEeqnarray}{Cl}
	\frac{\partial \mathcal{D}_{KL}(Q;P)}{\partial \boldsymbol{\theta}_1^*}&=\frac{\partial \psi(\boldsymbol{\theta}_1,\boldsymbol{\Theta}_1)}{\partial \boldsymbol{\theta}_1^*}-\boldsymbol{\mu}_2 \notag\\
	&=-\boldsymbol{\Theta}_1^{-1}\boldsymbol{\theta}_1-\boldsymbol{\mu}_2=\boldsymbol{\mu}_1-\boldsymbol{\mu}_2,\IEEEyesnumber\IEEEyessubnumber*\\
	\frac{\partial \mathcal{D}_{KL}(Q;P)}{\partial \boldsymbol{\Theta}_1}&=\frac{\partial \psi(\boldsymbol{\theta}_1,\boldsymbol{\Theta}_1)}{\partial \boldsymbol{\Theta}_1}-\frac{\partial \mathrm{tr}(\mathbf{V}_2\boldsymbol{\Theta}_1)}{\partial\boldsymbol{\Theta}_1} \notag\\
	&=\mathbf{I}\odot(\boldsymbol{\mu}_1\boldsymbol{\mu}_1^H-\boldsymbol{\Theta}_1^{-1})- \mathbf{I}\odot\mathbf{V}_2 \notag\\
	&=\mathbf{I}\odot\mathbf{V}_1-\mathbf{I}\odot\mathbf{V}_2,
\end{IEEEeqnarray}
where $\mathbf{V}_1=\boldsymbol{\Sigma}_1+\boldsymbol{\mu}_1\boldsymbol{\mu}_1^H$ and $\mathbf{V}_2=\boldsymbol{\Sigma}_2+\boldsymbol{\mu}_2\boldsymbol{\mu}_2^H$. Thus, we have the following property of the \emph{m-}projection
\begin{IEEEeqnarray}{Cl}
\label{eq:m_projection}
	\boldsymbol{\mu}_1 &= \boldsymbol{\mu}_2,\IEEEyesnumber\IEEEyessubnumber*\\
	\mathbf{I}\odot\boldsymbol{\Sigma}_1&=\mathbf{I}\odot\boldsymbol{\Sigma}_2.
\end{IEEEeqnarray}

\subsubsection{Auxiliary Manifolds}

In the linear detection case, computing the \emph{a posteriori} mean and variances of the transmitted symbols is the same as performing the $m$-projection. However, a direct computing requires the full inverse of $\boldsymbol{\Theta}_{\mathrm{post}}$.
In the nonlinear detection case, even more complicated calculations are needed to obtain a similar $m$-projection.

To avoid computing the $m$-projection of the \emph{a posteriori} distribution directly, a sequence of carefully constructed {auxiliary manifolds} is introduced to achieve the approximate or exact $m$-projection with much lower complexity. The role of constructing these auxiliary manifold is the same as the factorization of a probability density function (PDF) in the message passing algorithm, but in a different view.  The detailed information geometry framework is provided in \cite{ICIGA}.

Since different constructing of auxiliary manifolds will lead to different algorithms, the specific constructing method is the key in deriving the IGA.
In traditional IGA frameworks \cite{IGChEst}, the number of auxiliary manifolds equals the number of BS antennas, which entails a per-iteration computational complexity proportional to the number of BS antennas and is prohibitive in practical XL-MIMO systems. 

In the next section, we propose a novel auxiliary manifolds construction that achieves fast convergence in the XL-MIMO detection while significantly reducing the number of required auxiliary manifolds. Based on this construction, we derive the CS-IGA.  It might also be possible to apply the proposed construction to the message passing framework to derive a new algorithm equivalent to the proposed algorithm. However, we prefer the IG framework since the new construction looks more natural under it.

\section{CS-IGA for Linear Detection}\label{CS-IGA Linear}
In this section, we present the CS-IGA for approximate LMMSE detection in XL-MIMO systems. We begin by a new construction of the auxiliary manifolds. Next, we provide a detailed derivation of the CS-IGA. We then prove that the proposed algorithm converges to the LMMSE estimation. Finally, we analyze the computational complexity of the proposed approach.

\subsection{Construction of Auxiliary Manifolds with Cross Splitting}

Starting from the received signal model in \eqref{eq:system_model} and under the Gaussian assumption for \(\mathbf{x}\) in the linear case, the exact \emph{a posteriori} can be expressed in its natural‐parameter form as in \eqref{eq:post_nat}.  We begin by decomposing the precision matrix
\begin{IEEEeqnarray}{Cl}
	\mathbf{K} = \frac{1}{\sigma_z^2}\mathbf{H}^H\mathbf{H} +  \mathbf{I} 
	= \bar{\mathbf{K}}  + \mathbf{I}\odot\mathbf{K},\IEEEyesnumber\IEEEyessubnumber*\\
	\quad
	\bar{K}_{ij} = 
	\begin{cases}
		K_{ij}, & i\neq j,\\
		0,      & i=j.
	\end{cases}
\end{IEEEeqnarray}
Let \(\mathbf{D} =  \mathbf{I}\odot\mathbf{K}\) collect the diagonal entries.  For each \(n=1,\dots,N\), define the off‐diagonal vector as
\[
\bar{\mathbf{k}}_n 
= \tfrac12\bigl[\,K_{1n},\dots,K_{(n-1)n},K_{n(n+1)}^*,\dots,K_{nN}^*\,\bigr]^H
\]
and let \(\mathbf{P}_{1n}\) be the permutation matrix that moves index \(n\) to the first position.

We then perform a splitting of the  natural parameter $\boldsymbol{\Theta}_{\mathrm{post}}$ of the \emph{a posteriori} PDF into \(N\) low‐rank components plus a diagonal term as
\begin{IEEEeqnarray}{Cl}
	\label{eq:splitting}
	\boldsymbol{\theta}_{\mathrm{post}}
	&= \sum_{n=1}^N \mathbf{b}_n,\IEEEyesnumber\IEEEyessubnumber*\label{eq:split_nat1}\\
	\quad
	\boldsymbol{\Theta}_{\mathrm{post}}
	&= -\Bigl(\sum_{n=1}^N \mathbf{C}_n + \mathbf{D}\Bigr),
	\label{eq:split_nat2}
\end{IEEEeqnarray}
where \(\mathbf{b}_n\) and \(\mathbf{C}_n\) are given by
\begin{IEEEeqnarray}{Cl}
	\mathbf{P}_{1n}\,\mathbf{b}_n
	&=
	\begin{bmatrix}
		\sigma_z^{-2}\,\mathbf{h}_n^H\mathbf{y}\\[4pt]
		\mathbf{0}
	\end{bmatrix},\IEEEyesnumber\IEEEyessubnumber*\\
	\quad
	\mathbf{P}_{1n}\,\mathbf{C}_n\,\mathbf{P}_{1n}^H
	&=
	\begin{bmatrix}
		0 & \bar{\mathbf{k}}_n^H\\[4pt]
		\bar{\mathbf{k}}_n & \mathbf{0}_{N-1}
	\end{bmatrix}.
\end{IEEEeqnarray}
As we can see, each $\mathbf{b}_n$ is a scaled element from the matched filter, and each $\mathbf{C}_n$ is a negative semi-definite matrix and looks like a cross-splitting of $\mathbf{K}$. 

The \(n\)-th auxiliary manifold is defined by
\begin{IEEEeqnarray}{Cl}
	\label{eq:Ams}
	\mathcal{M}_n:
	\quad
	p_n(\mathbf{x}) \propto 
	\exp \bigl\{\mathbf{x}^H\boldsymbol{\theta}_n
	+\boldsymbol{\theta}_n^H\mathbf{x}
	+\mathbf{x}^H\boldsymbol{\Theta}_n\mathbf{x}\bigr\},
\end{IEEEeqnarray}
with natural parameters
\begin{IEEEeqnarray}{Cl}
	\label{eq: natural_para_AM}
	\boldsymbol{\theta}_n &= \mathbf{b}_n + \boldsymbol{\lambda}_n,\IEEEyesnumber\IEEEyessubnumber\\
	\boldsymbol{\Theta}_n &= -\bigl(\mathbf{C}_n + \mathbf{D} + \boldsymbol{\Lambda}_n\bigr),\IEEEyessubnumber
\end{IEEEeqnarray}
where \(\boldsymbol{\lambda}_n\) and diagonal matrix  \(\boldsymbol{\Lambda}_n\) are free variables. 
The aim is to find  fixed \(\boldsymbol{\lambda}_n\) and   \(\boldsymbol{\Lambda}_n\) to replace \(\sum_{m\neq n}\mathbf{b}_m\) and \(\sum_{m\neq n}\mathbf{C}_m\) in $ p(\mathbf{x}|\mathbf{y}) $ while keeping the mean and diagonal of covariance matrix unchanged.  


Similarly, the natural parameters of the objective manifold  
\begin{IEEEeqnarray}{Cl}
	\mathcal{M}_0:
	\quad
	p_0(\mathbf{x}) \propto 
	\exp \bigl\{\mathbf{x}^H\boldsymbol{\theta}_0 
	+ \boldsymbol{\theta}_0^H\mathbf{x}
	+\mathbf{x}^H\boldsymbol{\Theta}_0\mathbf{x}\bigr\},
\end{IEEEeqnarray}
is defined as
\begin{IEEEeqnarray}{Cl}
	\label{eq: natural_param_OBM}
	\boldsymbol{\theta}_0 &= \boldsymbol{\lambda}_0,\IEEEyesnumber\IEEEyessubnumber\\
	\boldsymbol{\Theta}_0 &= -\bigl(\boldsymbol{\Lambda}_0 + \mathbf{D}\bigr),\IEEEyessubnumber
\end{IEEEeqnarray}
where \(\boldsymbol{\lambda}_0\) and diagonal matrix \(\boldsymbol{\Lambda}_0\) are free variables. The aim is to find fixed  \(\boldsymbol{\lambda}_0\) and  \(\boldsymbol{\Lambda}_0\) to  replace \(\sum_m\mathbf{b}_m\) and \(\sum_m\mathbf{C}_m\) in $ p(\mathbf{x}|\mathbf{y}) $ while keeping the mean and diagonal of covariance matrix unchanged.  

To guarantee that we can find fixed \(\boldsymbol{\lambda}_n\),   \(\boldsymbol{\Lambda}_n\),  \(\boldsymbol{\lambda}_0\) and  \(\boldsymbol{\Lambda}_0\) as we want,  the \(e\)- and \(m\)-conditions are introduced in the IG framework.  The \(e\)-condition in the natural parameter domain is
\begin{equation}
	\label{eq:e-condition-natural}
	\sum_{n=1}^N (\boldsymbol{\theta}_n,\boldsymbol{\Theta}_n)
	+ (1-N)\,(\boldsymbol{\theta}_0,\boldsymbol{\Theta}_0)
	 = (\boldsymbol{\theta}_{\mathrm{post}},\boldsymbol{\Theta}_{\mathrm{post}}),
\end{equation}
which is equivalent to
\begin{IEEEeqnarray}{Cl}
	\label{eq:e-condition-exp}
	\sum_{n=1}^N (\boldsymbol{\lambda}_n,\boldsymbol{\Lambda}_n)
	+ (1-N)\,(\boldsymbol{\lambda}_0,\boldsymbol{\Lambda}_0)
	\;=\;\mathbf{0}.
\end{IEEEeqnarray}
The \(m\)-condition in the expectation parameter domain is 
\begin{IEEEeqnarray}{Cl}
	\label{eq: m-condition}
	(\boldsymbol{\mu}_n,\mathbf{I}\odot\boldsymbol{\Sigma}_n)
	= (\boldsymbol{\mu}_0,\mathbf{I}\odot\boldsymbol{\Sigma}_0),
	\quad
	\forall n=1,\dots,N.
\end{IEEEeqnarray}
Here, \(\boldsymbol{\mu}_n,\boldsymbol{\Sigma}_n\) and \(\boldsymbol{\mu}_0,\boldsymbol{\Sigma}_0\) denote the mean and covariance of the \(n\)-th auxiliary PDF and the objective PDF, respectively.  The $m$-condition ensures all auxiliary PDFs and objective PDF have the same mean and diagonal of covariance matrix. 
The $e$-condition ensures these means and diagonal of covariance matrices are good approximates of that of the \emph{a posteriori} PDF.

\subsection{CS-IGA for signal detection}

The $e$-condition always holds in the iterative process. 
In the following, we introduce several steps to make the $m$-condition hold.

First, we introduce the calculation of  \(\boldsymbol{\mu}_n,\boldsymbol{\Sigma}_n\).
For each auxiliary defined in \eqref{eq: natural_para_AM}, we denote the $n$‑th components of 
\(
(\boldsymbol{\lambda}_n,\boldsymbol{\Lambda}_n,\mathbf{D},\boldsymbol{\mu}_n)
\)
by 
\(
(\lambda_n,\Lambda_n,d_n,\mu_n),
\)
and collect the remaining $(N -1)$ entries into
\(
(\bar{\boldsymbol{\lambda}}_n,\bar{\boldsymbol{\Lambda}}_n,\bar{\mathbf{D}}_n,\bar{\boldsymbol{\mu}}_n).
\)
With this notation, the following theorem gives low complexity calculations for the expectation parameters of the $n$‑th auxiliary manifold.

\begin{theorem} \label{theorem 1}
	For the $n$-th auxiliary manifold with natural parameters $(\boldsymbol{\theta}_n,\boldsymbol{\Theta}_n)$, the expectation parameters are calculated as
	\begin{IEEEeqnarray}{Cl}
		\label{eq: mu_q and Sigma_q}
		&\mathbf{P}_{1n}\boldsymbol{\mu}_n=\begin{bmatrix}
			\mu_n\\
			\bar{\boldsymbol{\mu}}_n
		\end{bmatrix} = \left[\begin{array}{c}
			\frac{r_n}{\sigma_z^2}\mathbf{h}_n^H\mathbf{y}+r_n\lambda_n-r_nv_n\\
			-\mu_n\check{\boldsymbol{\Lambda}}_n\bar{\mathbf{k}}_n+\check{\boldsymbol{\Lambda}}_n\bar{\boldsymbol{\lambda}}_{n}
		\end{array}\right],
		\IEEEyesnumber\IEEEyessubnumber\label{eq: mu_q}\\
		&\mathbf{P}_{1n}\boldsymbol{\Sigma}_n\mathbf{P}_{1n}^H =  \left[\begin{array}{cc}
			r_n & \mathbf{m}_n^H \\ 
			\mathbf{m}_n  & \check{\boldsymbol{\Lambda}}_n+{r_n}\check{\boldsymbol{\Lambda}}_n\bar{\mathbf{k}}_n\bar{\mathbf{k}}_n^H\check{\boldsymbol{\Lambda}}_n
		\end{array}\right],
		\label{eq: Sigma_q}\IEEEyessubnumber
	\end{IEEEeqnarray}
	where $\check{\boldsymbol{\Lambda}}_n$, $r_n$, $\mathbf{m}_n$ and $v_n$ are given by
	\begin{IEEEeqnarray}{Cl}
		&\check{\boldsymbol{\Lambda}}_n = (\bar{\boldsymbol{\Lambda}}_{ n}+\bar{\mathbf{D}}_{ n})^{-1}, 
		\IEEEyesnumber\IEEEyessubnumber\\
		& r_n = (\Lambda_n+d_n- \bar{\mathbf{k}}_n^H\check{\boldsymbol{\Lambda}}_n\bar{\mathbf{k}}_n)^{-1},
		\IEEEyessubnumber\\
		& \mathbf{m}_n = -{r_n}\check{\boldsymbol{\Lambda}}_n\bar{\mathbf{k}}_n,~ v_n=\bar{\mathbf{k}}_n^H\check{\boldsymbol{\Lambda}}_n\bar{\boldsymbol{\lambda}}_{n}.
		\IEEEyessubnumber
	\end{IEEEeqnarray}
\end{theorem}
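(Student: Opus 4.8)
The plan is to compute the expectation parameters directly from the Legendre duality \eqref{eq:legendre2}, which gives $\boldsymbol{\Sigma}_n=-\boldsymbol{\Theta}_n^{-1}$ and $\boldsymbol{\mu}_n=-\boldsymbol{\Theta}_n^{-1}\boldsymbol{\theta}_n=\boldsymbol{\Sigma}_n\boldsymbol{\theta}_n$. Thus the entire theorem reduces to inverting $-\boldsymbol{\Theta}_n=\mathbf{C}_n+\mathbf{D}+\boldsymbol{\Lambda}_n$ (using \eqref{eq: natural_para_AM}) and applying the resulting inverse to $\boldsymbol{\theta}_n=\mathbf{b}_n+\boldsymbol{\lambda}_n$. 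The structural observation I would exploit is that conjugating by the permutation $\mathbf{P}_{1n}$ turns $-\boldsymbol{\Theta}_n$ into an arrowhead-type $2\times2$ block matrix whose only dense coupling sits in the first row and column, so that a Schur-complement inversion against the diagonal bottom block costs only $\mathcal{O}(N)$.

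First I would conjugate by $\mathbf{P}_{1n}$. Since a permutation matrix is orthogonal, $\mathbf{P}_{1n}\boldsymbol{\Sigma}_n\mathbf{P}_{1n}^H=\bigl(\mathbf{P}_{1n}(-\boldsymbol{\Theta}_n)\mathbf{P}_{1n}^H\bigr)^{-1}$. Using the cross-splitting structure of $\mathbf{C}_n$ in \eqref{eq:splitting} together with the fact that $\mathbf{D}+\boldsymbol{\Lambda}_n$ is diagonal, the permuted precision matrix becomes
\begin{IEEEeqnarray}{rCl}
	\mathbf{P}_{1n}(-\boldsymbol{\Theta}_n)\mathbf{P}_{1n}^H &=& \begin{bmatrix} \Lambda_n+d_n & \bar{\mathbf{k}}_n^H \\ \bar{\mathbf{k}}_n & \bar{\boldsymbol{\Lambda}}_n+\bar{\mathbf{D}}_n \end{bmatrix},\notag
\end{IEEEeqnarray}
where the scalar $\Lambda_n+d_n$ is the moved $n$-th diagonal entry, the off-diagonal vector $\bar{\mathbf{k}}_n$ comes entirely from $\mathbf{C}_n$, and the bottom-right block is diagonal.

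Second I would apply the standard Schur-complement block-inversion formula to this partition, taking the diagonal bottom-right block as the pivot. Its inverse is exactly $\check{\boldsymbol{\Lambda}}_n=(\bar{\boldsymbol{\Lambda}}_n+\bar{\mathbf{D}}_n)^{-1}$; the Schur complement of the pivot is $\Lambda_n+d_n-\bar{\mathbf{k}}_n^H\check{\boldsymbol{\Lambda}}_n\bar{\mathbf{k}}_n$, whose reciprocal is $r_n$; and the off-diagonal block of the inverse is $\mathbf{m}_n=-r_n\check{\boldsymbol{\Lambda}}_n\bar{\mathbf{k}}_n$. Reading off the four blocks of the inverse reproduces \eqref{eq: Sigma_q} verbatim, and the $(1,1)$ entry $r_n$ identifies the leading variance.

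Finally I would obtain the mean by multiplying this covariance by the permuted natural parameter $\mathbf{P}_{1n}\boldsymbol{\theta}_n=[\,\sigma_z^{-2}\mathbf{h}_n^H\mathbf{y}+\lambda_n,\ \bar{\boldsymbol{\lambda}}_n^{T}]^{T}$, which follows from $\mathbf{P}_{1n}\mathbf{b}_n$ in \eqref{eq:splitting} and the partition of $\boldsymbol{\lambda}_n$. The top entry yields $\mu_n=r_n(\sigma_z^{-2}\mathbf{h}_n^H\mathbf{y}+\lambda_n)-r_nv_n$ once $\mathbf{m}_n^H\bar{\boldsymbol{\lambda}}_n$ is recognized as $-r_nv_n$ with $v_n=\bar{\mathbf{k}}_n^H\check{\boldsymbol{\Lambda}}_n\bar{\boldsymbol{\lambda}}_n$. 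The only non-mechanical step, and the one I would treat most carefully, is simplifying the lower block of the mean: expanding $\mathbf{m}_n(\sigma_z^{-2}\mathbf{h}_n^H\mathbf{y}+\lambda_n)+(\check{\boldsymbol{\Lambda}}_n+r_n\check{\boldsymbol{\Lambda}}_n\bar{\mathbf{k}}_n\bar{\mathbf{k}}_n^H\check{\boldsymbol{\Lambda}}_n)\bar{\boldsymbol{\lambda}}_n$ and recognizing the recurring scalar $r_n\bigl((\sigma_z^{-2}\mathbf{h}_n^H\mathbf{y}+\lambda_n)-v_n\bigr)=\mu_n$, which collapses the result to $\bar{\boldsymbol{\mu}}_n=-\mu_n\check{\boldsymbol{\Lambda}}_n\bar{\mathbf{k}}_n+\check{\boldsymbol{\Lambda}}_n\bar{\boldsymbol{\lambda}}_n$, matching \eqref{eq: mu_q}. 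I expect no genuine analytic difficulty; the care needed is purely in the bookkeeping of the Hermitian conjugation in the definition of $\bar{\mathbf{k}}_n$ and the repeated use of $\check{\boldsymbol{\Lambda}}_n^H=\check{\boldsymbol{\Lambda}}_n$ as a real diagonal matrix.
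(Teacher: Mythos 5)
Your proposal is correct and follows essentially the same route as the paper's Appendix A: permute to expose the arrowhead structure, invert by block/Schur-complement formulas against the diagonal bottom block, and then multiply by the permuted natural parameter and collapse the lower block of the mean via the recurring scalar $\mu_n$. The only cosmetic difference is that you read the bottom-right block of the inverse directly from the Schur-complement identity, whereas the paper first writes it as the inverse of a rank-one update and then applies the Sherman--Morrison formula; the two computations are equivalent.
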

\begin{proof}
	See Appendix \ref{Appendix A}.
\end{proof}


Then, the $m$-projection of the $n$-th auxiliary PDF to the objective manifold is obtained from \eqref{eq:m_projection} as
\begin{IEEEeqnarray}{Cl}
	\label{eq: m-projection}
	\mathbf{P}_{1n}\boldsymbol{\Theta}^0_n\mathbf{P}_{1n}^H &= -\left(\mathbf{I}\odot\mathbf{P}_{1n}\boldsymbol{\Sigma}_n\mathbf{P}_{1n}^H\right)^{-1},
	\label{eq: m-projection_Theta}\IEEEyesnumber\IEEEyessubnumber\\
	\mathbf{P}_{1n}\boldsymbol{\theta}^0_n &= -\mathbf{P}_{1n}\boldsymbol{\Theta}_n^0\boldsymbol{\mu}_n.
	\IEEEyessubnumber\label{eq: m-projection_theta}
\end{IEEEeqnarray}
By substituting \eqref{eq: Sigma_q} into \eqref{eq: m-projection_Theta}, $\boldsymbol{\Theta}_n^0$ can be obtained by
\begin{IEEEeqnarray}{Cl}
	\mathbf{P}_{1n}\boldsymbol{\Theta}_n^0\mathbf{P}_{1n}^H&=
	\begin{bmatrix}
		r_n&\mathbf{0}\\
		\mathbf{0}&\check{\boldsymbol{\Lambda}}_n+r_n\check{\boldsymbol{\Lambda}}_n\bar{\mathbf{L}}_n\check{\boldsymbol{\Lambda}}_n
	\end{bmatrix}^{-1},
\end{IEEEeqnarray}
where  $\bar{\mathbf{L}}_n = \mathbf{I}\odot(\bar{\mathbf{k}}_n\bar{\mathbf{k}}_n^H)$ is diagonal. For simplicity, we define $\boldsymbol{\Phi}_n = \check{\boldsymbol{\Lambda}}_n+r_n\check{\boldsymbol{\Lambda}}_n\bar{\mathbf{L}}_n\check{\boldsymbol{\Lambda}}_n$, and its inversion can be simplified as
\begin{IEEEeqnarray}{Cl}
	\boldsymbol{\Phi}_n^{-1} &= (\check{\boldsymbol{\Lambda}}_n+r_n\check{\boldsymbol{\Lambda}}_n\bar{\mathbf{L}}_n\check{\boldsymbol{\Lambda}}_n)^{-1} \notag\\
	&= \check{\boldsymbol{\Lambda}}_n^{-1}(\mathbf{I}+r_n\bar{\mathbf{L}}_n\check{\boldsymbol{\Lambda}}_n)^{-1} \notag\\
	&= \check{\boldsymbol{\Lambda}}_n^{-1}- \check{\boldsymbol{\Lambda}}_n^{-1}\left(\mathbf{I}+r_n\bar{\mathbf{L}}_n\check{\boldsymbol{\Lambda}}_n\right)^{-1}r_n\check{\boldsymbol{\Lambda}}_n\bar{\mathbf{L}}_n \notag\\
	&= \check{\boldsymbol{\Lambda}}_n^{-1}-r_n\bar{\mathbf{L}}_n\left(\mathbf{I}+r_n\bar{\mathbf{L}}_n\check{\boldsymbol{\Lambda}}_n\right)^{-1}.
\end{IEEEeqnarray}
We denote  $\mathbf{R}_n=(\mathbf{I}+r_n\bar{\mathbf{L}}_n\check{\boldsymbol{\Lambda}}_n)^{-1}$. Then $\boldsymbol{\Phi}_n^{-1} = \check{\boldsymbol{\Lambda}}_n^{-1}-r_n\bar{\mathbf L}_n\mathbf{R}_n$, and $\mathbf{P}_{1n}\boldsymbol{\Theta}_n^0\mathbf{P}_{1n}^H$ is rewritten as
\begin{IEEEeqnarray}{Cl}
	\label{eq:Theta_n_0}
	\mathbf{P}_{1n}\boldsymbol{\Theta}_n^0\mathbf{P}_{1n}^H &= \begin{bmatrix}
		r_n^{-1}&\mathbf{0}\\
		\mathbf{0}& \check{\boldsymbol{\Lambda}}_n^{-1}-r_n\bar{\mathbf{L}}_n\mathbf{R}_n
	\end{bmatrix}.
\end{IEEEeqnarray}
Next, we calculate $\mathbf{P}_{1n}\boldsymbol{\theta}_n^0$. By substituting \eqref{eq: mu_q} and \eqref{eq:Theta_n_0} into \eqref{eq: m-projection_theta}, $\mathbf{P}_{1n}\boldsymbol{\theta}_n^0$ can be rewritten as
\begin{IEEEeqnarray}{Cl}
	\mathbf{P}_{1n}\boldsymbol{\theta}_n^0&= \mathbf{P}_{1n}\begin{bmatrix}
		r_n^{-1}&\mathbf{0}\\
		\mathbf{0}&\check{\boldsymbol{\Lambda}}^{-1}_n+r_n\bar{\mathbf{L}}_n\mathbf{R}_n
	\end{bmatrix} \notag\\
	&~~~~~\begin{bmatrix}
		r_n\frac{1}{\sigma_z^{2}}\mathbf{h}_n^H\mathbf{y}+r_n\lambda_n-r_nv_n\\
		-\mu_n\check{\boldsymbol{\Lambda}}_n\bar{\mathbf{k}}_n+\check{\boldsymbol{\Lambda}}_n\bar{\boldsymbol{\lambda}}_n
	\end{bmatrix}.
	\label{eq:thetan0_1}
\end{IEEEeqnarray}
Then, the $n$-th element of $\boldsymbol{\theta}_n^0$ is calculated as
\begin{IEEEeqnarray}{Cl}
	\theta_n^0 &=  \frac{1}{\sigma_z^2}\mathbf{h}_n^H\mathbf{y}+\lambda_{ n}-v_n.
	\label{eq:thetan0_0}
\end{IEEEeqnarray}
The remain elements $\bar{\boldsymbol{\theta}}_n^0$ are written as
\begin{IEEEeqnarray}{Cl}
	\bar{\boldsymbol{\theta}}_n^0 &= \left(\check{\boldsymbol{\Lambda}}_n^{-1}-r_n\bar{\mathbf{L}}_n\mathbf{R}_n\right)\left(-\mu_n\check{\boldsymbol{\Lambda}}_n\bar{\mathbf{k}}_n+\check{\boldsymbol{\Lambda}}_n\bar{\boldsymbol{\lambda}}_n\right)\notag\\
	&=\left(\mathbf{I}-r_n\bar{\mathbf{L}}_n\mathbf{R}_n\check{\boldsymbol{\Lambda}}_n\right)(\bar{\boldsymbol{\lambda}}_n-\mu_n\bar{\mathbf{k}}_n).
	\label{eq:thetan0_bar}
\end{IEEEeqnarray}
We now analyze the relationship between  $\left(\mathbf{I}-r_n\bar{\mathbf{L}}_n\mathbf{R}_n\check{\boldsymbol{\Lambda}}_n\right)$ and $\mathbf{R}_n$. For simplicity, we define $\boldsymbol{\Gamma}= r_n\bar{\mathbf{L}}_n\check{\boldsymbol{\Lambda}}_n$, then $\mathbf{R}_n = \left(\mathbf{I}+\boldsymbol{\Gamma}\right)^{-1}$. We then have
\begin{IEEEeqnarray}{Cl}
	\label{eq:Gamma1}
	\mathbf{I}-\boldsymbol{\Gamma}\mathbf{R}_n &= \mathbf{I}-\boldsymbol{\Gamma}(\mathbf{I}+\boldsymbol{\Gamma})^{-1}.
\end{IEEEeqnarray}
By factoring the identity matrix $\mathbf{I}$ as $(\mathbf{I}+\boldsymbol{\Gamma})(\mathbf{I}+\boldsymbol{\Gamma})^{-1}$, \eqref{eq:Gamma1} can be rewritten as
\begin{IEEEeqnarray}{Cl}
	\mathbf{I}-\boldsymbol{\Gamma}\mathbf{R}_n &= [(\mathbf{I}+\boldsymbol{\Gamma})-\boldsymbol{\Gamma}](\mathbf{I}+\boldsymbol{\Gamma})^{-1} \notag\\
	&=(\mathbf{I}+\boldsymbol{\Gamma})^{-1}=\mathbf{R}_n.
\end{IEEEeqnarray}
Thus $\bar{\boldsymbol{\theta}}_n^0$ can be rewritten as
\begin{IEEEeqnarray}{Cl}
	\bar{\boldsymbol{\theta}}_n^0 &= \mathbf{R}_n(\bar{\boldsymbol{\lambda}}_n-\mu_n\bar{\mathbf{k}}_n). 
	\label{eq:thetan0_bar_simplified}
\end{IEEEeqnarray}
By combining \eqref{eq:thetan0_0} and \eqref{eq:thetan0_bar_simplified}, \eqref{eq:thetan0_1} can be rewritten as
\begin{IEEEeqnarray}{Cl}
	\mathbf{P}_{1n}\boldsymbol{\theta}_n^0 &= \begin{bmatrix}
		\frac{1}{\sigma_z^2}\mathbf{h}_n^H\mathbf{y}+\lambda_n-v_n\\
		\mathbf{R}_n(\bar{\boldsymbol{\lambda}}_n-\mu_n\bar{\mathbf{k}}_n)
	\end{bmatrix}.
	\label{eq:theta_n0}
\end{IEEEeqnarray}

Next, we calculate the beliefs that can replace $\mathbf{C}_n$ and $\mathbf{b}_n$. By observing the difference from $(\boldsymbol{\theta}_n^0,\boldsymbol{\Theta}_n^0)$ and $(\boldsymbol{\lambda}_n,\boldsymbol{\Lambda}_n)$, the beliefs corresponding to $\mathbf{C}_n$ and $\mathbf{b}_n$ are expressed as
\begin{IEEEeqnarray}{Cl}
	\label{eq: belief1}
	\boldsymbol{\Xi}_n &= {-\boldsymbol{\Theta}_n^0 - \mathbf{D} - \boldsymbol{\Lambda}_n},
	\IEEEyesnumber\IEEEyessubnumber\\
	\boldsymbol{\xi}_n &= \boldsymbol{\theta}_n^0 - \boldsymbol{\lambda}_n.
	\IEEEyessubnumber
\end{IEEEeqnarray}
Substituting \eqref{eq:Theta_n_0} and \eqref{eq:theta_n0} into \eqref{eq: belief1}, the beliefs can be rewritten as
\begin{IEEEeqnarray}{cl}
	\label{eq: beliefs 2}
	&{\left[\begin{array}{c}
			\Xi_n\\
			{\bar{\boldsymbol{\Xi}}}_{n}
		\end{array}\right]} = \left[\begin{array}{c}-\bar{\mathbf{k}}_n^H\check{\boldsymbol{\Lambda}}_n\bar{\mathbf{k}}_n\\ 
		-{r_n}\bar{\mathbf{L}}_n\mathbf{R}_n
	\end{array}\right],
	\IEEEyesnumber\IEEEyessubnumber\\
	 &\left[\begin{array}{c}
		\xi_n\\
		{\bar{\boldsymbol{\xi}}}_{n}
	\end{array}\right]=  \left[\begin{array}{c}
		\sigma_z^{-2}\mathbf{h}_n^H\mathbf{y}-v_n\\ 
		\mathbf{R}_n(\bar{\boldsymbol{\lambda}}_{ n}-\mu_n\bar{\mathbf{k}}_n)-\bar{\boldsymbol{\lambda}}_{n}
	\end{array}\right].
	\IEEEyessubnumber
\end{IEEEeqnarray}
Notice that $\Xi_n$ can also be written as $\text{tr}(\bar{\mathbf{L}}_n\check{\boldsymbol{\Lambda}}_n)$, where $\bar{\mathbf{L}}_n\check{\boldsymbol{\Lambda}}_n$ has been calculated in $\mathbf{R}_n$, \textit{i.e.}, the calculation of $\Xi_n$ is straight forward. 

Then, the parameters are updated as
\begin{IEEEeqnarray}{lClC}
	\label{eq: para update}
	&{\boldsymbol{\lambda}}_0^{t+1} = \sum_{ n}\boldsymbol{\xi}_n^t,\quad&{\boldsymbol{\Lambda}}_0^{t+1} = \sum_{n}\boldsymbol{\Xi}_n^t,
	\IEEEyesnumber\IEEEyessubnumber\\
	&\boldsymbol{\lambda}_n^{t+1} = \boldsymbol{\lambda}_0^{t+1}-\boldsymbol{\xi}_n^t,\quad&\boldsymbol{\Lambda}_n^{t+1} =\boldsymbol{\Lambda}_0^{t+1}-\boldsymbol{\Xi}_n^t,
	\IEEEyessubnumber
\end{IEEEeqnarray}
where $t$ is the iteration number. It's easy to verify that the $e$-condition always holds.

According to \eqref{eq: belief1} and \eqref{eq: para update}, when the iteration converges, the $m$-projection of the $n$-th auxiliary PDF satisfies
\begin{IEEEeqnarray}{Cl}
	\label{eq: converge point}
	&\boldsymbol{\theta}_n^0 = \boldsymbol{\lambda}_0,
	\IEEEyesnumber\IEEEyessubnumber\\
	&\boldsymbol{\Theta}_n^0 = \boldsymbol{\Theta}_0 =-(\boldsymbol{\Lambda}_0+\mathbf{D}) ,
	\IEEEyessubnumber
\end{IEEEeqnarray}
which makes the $m$-condition hold.

Then, the output mean and variance are easy to obtain by
\begin{IEEEeqnarray}{Cl}
	\label{eq:out mean and variance}
	&\hat{\boldsymbol{\mu}} = -\boldsymbol{\Theta}_0^{-1}\boldsymbol{\theta}_0 = (\boldsymbol{\Lambda}_0+\mathbf{D})^{-1}\boldsymbol{\lambda}_0,
	\IEEEyesnumber\IEEEyessubnumber\\
	&\hat{\boldsymbol{\Sigma}} = -\boldsymbol{\Theta}_0^{-1}=(\boldsymbol{\Lambda}_0+\mathbf{D})^{-1}.
	\IEEEyessubnumber
\end{IEEEeqnarray}
Because of the $e$-condition and $m$-condition, the output mean and variance provide accurate approximations of the original values.
Since $\boldsymbol{\Theta}_0$ is a diagonal matrix, its inversion requires significantly lower computational complexity.

In practice, we might need the damping factor in the iterations to make the algorithm converge. In such case, the parameters are updated as
\begin{IEEEeqnarray}{Cl}
	\label{eq: para update damping}
	&{\boldsymbol{\lambda}}_0^{t+1} = (1-\alpha)\boldsymbol{\lambda}_0^t+\alpha\sum_{ n}\boldsymbol{\xi}_n^t,
	\IEEEyesnumber\IEEEyessubnumber\\
	&{\boldsymbol{\Lambda}}_0^{t+1} = (1-\alpha)\boldsymbol{\Lambda}_0^t+\alpha\sum_{n}\boldsymbol{\Xi}_n^t,
	\IEEEyessubnumber\\
	&\boldsymbol{\lambda}_n^{t+1} = (1-\alpha)\boldsymbol{\lambda}_n^t+\alpha\sum_{m\neq n}\boldsymbol{\xi}_m^t,
	\IEEEyessubnumber\\
	&\boldsymbol{\Lambda}_n^{t+1} =(1-\alpha)\boldsymbol{\Lambda}_n^{t}+\alpha\sum_{m\neq n}\boldsymbol{\Xi}_m^t,
	\IEEEyessubnumber
\end{IEEEeqnarray}
where $\alpha$ is the damping factor. 

We summarize the CS-IGA in Algorithm \ref{algorithmn_CSIGA}.
\begin{algorithm}[!hbt]
	\KwData{Channel matrix $\mathbf{H}$, received signal $\mathbf{y}$, damping factor $\alpha$, noise power $\sigma_z^2$, maximum iteration number $T$.}
	\KwResult{The mean and variance of the detected signal}
	\SetKwProg{Fn}{Function}{:}{}
	\Fn{CS-IGA ($\mathbf{H},\mathbf{y},\alpha,\sigma_z^2,T$)}{
		$\boldsymbol{\lambda} \gets\mathbf{0}, \boldsymbol{\Lambda} \gets \mathbf{-1}$\;
		
		\For{$t \gets 1$ \textbf{to} $T$}{
			Calculate ${{\mu}}_n$ for all auxiliary manifolds as \eqref{eq: mu_q} \;
			Calculate the beliefs for all auxiliary manifolds as \eqref{eq: beliefs 2}\;
			Update parameters as \eqref{eq: para update damping}\;
		}
		Calculate $\hat{\boldsymbol{\mu}}$ and $\hat{\boldsymbol{\Sigma}}$ as \eqref{eq:out mean and variance}\;
		\KwRet{$\hat{\boldsymbol{\mu}}, \hat{\boldsymbol{\Sigma}}$}\;
	}
	\caption{CS-IGA for Signal Detection}
	\label{algorithmn_CSIGA}
\end{algorithm}

\begin{figure*}[!t]
	\normalsize
	\begin{IEEEeqnarray}{Cl}
		\mathcal{H}_e &= \left\{p(\mathbf{x};\boldsymbol{\theta},\boldsymbol{\Theta}) \Big| (\boldsymbol{\theta},\boldsymbol{\Theta})
		=\sum_{n=1}^{N} \alpha_i (\boldsymbol{\theta}_{n},\boldsymbol{\Theta}_n)
		+(1-\sum_{n=1}^{N} \alpha_i) (\boldsymbol{\theta}_0 ,\boldsymbol{\Theta}_0)
		\right\},\IEEEyesnumber\IEEEyessubnumber*\label{eq:e-hyperplane}\\
		\mathcal{H}_m &= \left\{p(\mathbf{x};\boldsymbol{\mu},\boldsymbol{\Sigma})\;\Big|\;(\boldsymbol{\mu},\mathbf{I}\odot\boldsymbol{\Sigma})	=(\boldsymbol{\mu}_n,\mathbf{I}\odot\boldsymbol{\Sigma}_n)
				=(\boldsymbol{\mu}_0,\mathbf{I}\odot\boldsymbol{\Sigma}_0),
		\;\forall n\in\mathcal{Z}_N^+
		\right\}.\label{eq:m-hyperplane}
	\end{IEEEeqnarray}
	\hrulefill
\end{figure*}
\subsection{Information Geometrical Interpretation}
The iterative process of the CS-IGA algorithm has an elegant geometric interpretation. The convergence can be visualized as a projection process on a statistical manifold, where multiple approximate distributions (the auxiliary points) are forced to align with a simplified target distribution (the objective point). Recall that the original point lies at \eqref{eq:linear posterior}.  By construction, the objective point and auxiliary point satisfy the \(e\)‑condition \eqref{eq:e-condition-natural} at every iteration.  Consequently, the objective point, the original point, and the auxiliary points all lie on the same \(e\)‑hyperplane \(\mathcal{H}_e\) defined in \eqref{eq:e-hyperplane}.  Here \((\boldsymbol{\theta}_0,\boldsymbol{\Theta}_0)\) and \((\boldsymbol{\theta}_n,\boldsymbol{\Theta}_n)\) vary during the algorithm, while \((\boldsymbol{\theta}_{\mathrm{post}},\boldsymbol{\Theta}_{\mathrm{post}})\) remains fixed.

On the dual side, the \(m\)‑hyperplane \(\mathcal{H}_m\) in \eqref{eq:m-hyperplane} constrains all points except the original point to share the same expectation parameters-namely, the \emph{a posteriori} mean and the diagonal of the \emph{a posteriori} covariance.  At initialization, the auxiliary points generally do not satisfy \(\mathcal{H}_m\).  As CS‑IGA iterates, the auxiliary points are \(m\)‑projected onto the objective manifold via \eqref{eq: m-projection}, and upon convergence both the objective and auxiliary points lie on \(\mathcal{H}_m\).  

Since  \(\mathcal{H}_e\) and \(\mathcal{H}_m\) shares the objective point and the auxiliary points, the original point lie on \(\mathcal{H}_e\) will be close to the \(\mathcal{H}_m\). 
Thus, the mean and diagonal of the \emph{a posteriori} covariance of points on \(\mathcal{H}_m\) is a good appoximate of the original point. 
In certain case, we can even prove that mean of points on \(\mathcal{H}_m\) is equal to that of the original point.


\subsection{Fixed Point and Computational Complexity}
In this subsection, we first prove the mean at the fixed point of CS‑IGA is equal to the LMMSE detection. Then, we analyze the computational complexity of the CS‑IGA and compare it with the Bayes‑optimal AMP and the traditional IGA.

When the algorithm converges, both the \(e\)‑condition and \(m\)‑condition hold, which leads to
\begin{IEEEeqnarray}{Cl}
	&(\boldsymbol{\mu}_n^{\bullet},\mathbf{I}\odot\boldsymbol{\Sigma}_n^{\bullet})  =(\boldsymbol{\mu}_0^{\bullet},\boldsymbol{\Sigma}_0^{\bullet}), 
	\IEEEyesnumber\IEEEyessubnumber\label{eq: fixed point mean}\\
	&\sum_{n=1}^{N}(\boldsymbol{\lambda}_n^{\bullet},\boldsymbol{\Lambda}_n^{\bullet})+(1-N)(\boldsymbol{\lambda}_0^{\bullet},\boldsymbol{\Lambda}_0^{\bullet})=\mathbf{0}.
	\IEEEyessubnumber\label{eq: fixed point var}
\end{IEEEeqnarray}
The following theorem shows that the fixed point of  the CS‑IGA is equivalent to the LMMSE estimation of \(\mathbf{x}\).

\begin{theorem} 
	At the fixed point of the CS‑IGA, the mean \(\boldsymbol{\mu}^{\bullet}\) is equal to the LMMSE estimation or the \emph{a posteriori} mean in \eqref{eq:posterior mean}.
\end{theorem}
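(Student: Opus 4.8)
The plan is to reduce the entire statement to verifying a single linear identity at the fixed point, namely the LMMSE normal equation
\(\mathbf{K}\,\boldsymbol{\mu}^{\bullet} = \boldsymbol{\theta}_{\mathrm{post}}\), where \(\mathbf{K} = \sigma_z^{-2}\mathbf{H}^H\mathbf{H} + \mathbf{I} = -\boldsymbol{\Theta}_{\mathrm{post}}\). Since \(\mathbf{K}\) is positive definite and hence invertible, this equation has the unique solution \(\boldsymbol{\mu}^{\bullet} = \mathbf{K}^{-1}\boldsymbol{\theta}_{\mathrm{post}}\), which is exactly the \emph{a posteriori} mean in \eqref{eq:posterior mean}. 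So once the identity is established, the theorem follows immediately.

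First I would translate the fixed-point conditions into statements about the \emph{full} mean vectors. At convergence the \(m\)-condition \eqref{eq: m-condition} equates the full mean of every auxiliary PDF with that of the objective PDF, so \(\boldsymbol{\mu}_n^{\bullet} = \boldsymbol{\mu}_0^{\bullet} = \boldsymbol{\mu}^{\bullet}\) for all \(n\), where \(\boldsymbol{\mu}^{\bullet} = -\boldsymbol{\Theta}_0^{-1}\boldsymbol{\theta}_0\) is the output mean of \eqref{eq:out mean and variance}. Applying the Legendre relation \eqref{eq:mcoord_mean} to each auxiliary point, \(\boldsymbol{\mu}_n^{\bullet} = -\boldsymbol{\Theta}_n^{-1}\boldsymbol{\theta}_n\), I rewrite this as \(\boldsymbol{\theta}_n = -\boldsymbol{\Theta}_n\boldsymbol{\mu}^{\bullet}\) and substitute the natural-parameter definitions \eqref{eq: natural_para_AM} to obtain the per-auxiliary identity \(\mathbf{b}_n + \boldsymbol{\lambda}_n^{\bullet} = (\mathbf{C}_n + \mathbf{D} + \boldsymbol{\Lambda}_n^{\bullet})\boldsymbol{\mu}^{\bullet}\) for every \(n\).

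Next I would sum these \(N\) vector identities over \(n\) and collapse them using the structural facts already available. The splitting \eqref{eq:splitting} gives \(\sum_n\mathbf{b}_n = \boldsymbol{\theta}_{\mathrm{post}}\) and \(\sum_n\mathbf{C}_n = \bar{\mathbf{K}}\) (the off-diagonal part of \(\mathbf{K}\), so that \(\bar{\mathbf{K}}+\mathbf{D} = \mathbf{K}\)), while the \(e\)-condition \eqref{eq: fixed point var} gives \(\sum_n\boldsymbol{\lambda}_n^{\bullet} = (N-1)\boldsymbol{\lambda}_0^{\bullet}\) and \(\sum_n\boldsymbol{\Lambda}_n^{\bullet} = (N-1)\boldsymbol{\Lambda}_0^{\bullet}\). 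Summation then yields \(\boldsymbol{\theta}_{\mathrm{post}} + (N-1)\boldsymbol{\lambda}_0^{\bullet} = \bigl(\bar{\mathbf{K}} + N\mathbf{D} + (N-1)\boldsymbol{\Lambda}_0^{\bullet}\bigr)\boldsymbol{\mu}^{\bullet}\). Finally, the objective-mean definition \(\boldsymbol{\mu}^{\bullet} = (\boldsymbol{\Lambda}_0^{\bullet}+\mathbf{D})^{-1}\boldsymbol{\lambda}_0^{\bullet}\) gives \(\boldsymbol{\lambda}_0^{\bullet} = (\boldsymbol{\Lambda}_0^{\bullet}+\mathbf{D})\boldsymbol{\mu}^{\bullet}\); substituting this on the left makes the \((N-1)\boldsymbol{\Lambda}_0^{\bullet}\boldsymbol{\mu}^{\bullet}\) terms cancel, and the \(N\mathbf{D}\) term on the right together with the \((N-1)\mathbf{D}\) term on the left leave a single \(\mathbf{D}\boldsymbol{\mu}^{\bullet}\), giving \(\boldsymbol{\theta}_{\mathrm{post}} = (\bar{\mathbf{K}}+\mathbf{D})\boldsymbol{\mu}^{\bullet} = \mathbf{K}\boldsymbol{\mu}^{\bullet}\), as required.

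The step to watch most carefully is the bookkeeping of the diagonal matrix \(\mathbf{D}\): because \(\mathbf{D}\) appears once in every auxiliary point and once in the objective point, the sum produces an \(N\mathbf{D}\) term that must be reconciled against the \((N-1)\) weighting coming from the \(e\)-condition, and the exact collapse to a single \(\mathbf{D}\) relies on the consistency of the off-diagonal/diagonal split of \(\mathbf{K}\) with the placement of \(\mathbf{D}\) in both \eqref{eq: natural_para_AM} and \eqref{eq:out mean and variance}. I would therefore verify the splitting identity \(\sum_n\mathbf{C}_n = \bar{\mathbf{K}}\) explicitly, tracking the factor \(\tfrac12\) in \(\bar{\mathbf{k}}_n\) and the Hermitian symmetry of \(\mathbf{K}\), so that the final reduction is airtight. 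It is worth emphasizing that this argument never invokes the explicit low-complexity formulas of Theorem~\ref{theorem 1}; it needs only the fixed-point \(e\)- and \(m\)-conditions together with the cross-splitting structure, which is why it applies verbatim regardless of the damping used in the iteration.
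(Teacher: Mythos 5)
Your proposal is correct and follows essentially the same route as the paper's Appendix~B: write each auxiliary mean via the Legendre relation, sum the resulting identities over $n$, invoke the splitting identities $\sum_n\mathbf{b}_n=\boldsymbol{\theta}_{\mathrm{post}}$, $\sum_n\mathbf{C}_n=\bar{\mathbf{K}}$ together with the fixed-point $e$-condition, and cancel against $\boldsymbol{\lambda}_0^{\bullet}=(\boldsymbol{\Lambda}_0^{\bullet}+\mathbf{D})\boldsymbol{\mu}^{\bullet}$ to recover the LMMSE normal equation. Your bookkeeping of the $N\mathbf{D}$ term is in fact more careful than the paper's displayed intermediate step, and the two derivations agree on the final result.
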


\begin{proof}
	See Appendix \ref{Appendix B}.
\end{proof}

Next, we analyze the computational complexity of the CS‑IGA. Before the iterations begin, the CS‑IGA computes \(\mathbf{H}^H\mathbf{H}\), which is commonly done in user scheduling \cite{schedule}. The CS-IGA also calculate the matched filter result \(\mathbf{H}^H\mathbf{y}\), which is shared across methods. We therefore omit these from the iteration complexity. In each iteration, CS‑IGA computes \(\boldsymbol{\Xi}_n\) and \(\boldsymbol{\xi}_n\) for each AM. Since both \(\bar{\mathbf{L}}_n\) and \(\mathbf{R}_n\) are diagonal, each update costs \(\mathcal{O}(N)\). Over \(N\) AMs and \(T\) iterations, the total complexity is \(\mathcal{O}(T N^2)\). For small \(T\), this complexity is significantly less than the \(\mathcal{O}(N^3)\) required for direct LMMSE detection. Compared to traditional IGA \cite{IGChEst} and Bayes‑optimal AMP, each of which has complexity \(\mathcal{O}(T M N)\), CS‑IGA is more efficient whenever \(N < M\), as is typical in practical XL‑MIMO systems.

\section{NCS‑IGA for Non‑Linear Detection}\label{CS_IGA Non-Linear}

In this section, we extend the CS‑IGA framework to the non‑linear setting, yielding the NCS‑IGA algorithm.

\subsection{Algorithm Derivation}

The goal of the NCS‑IGA is to approximate the \emph{a posteriori} marginal distributions \(\{p(x_k|\mathbf{y})\}\).
To derive the NCS‑IGA, we need to add an extra auxiliary manifold that incorporate the discrete priors \(p_{pr,n}(x_n)\).  At each iteration, the algorithm computes the beliefs by projecting auxiliary points onto the objective manifold, exactly as in the  CS‑IGA case except the extra manifold (cf.\ \eqref{eq: m-projection}–\eqref{eq: para update damping}). 
The aim of defining the extra auxiliary manifold is the same as the Gaussian approximation in the EP, but in the IG view. The resulting estimates of \(p (x_k|\mathbf{y})\) are then used  to form soft decisions.


For the \textit{a posteriori} distribution in \eqref{eq:posterior non-liner}, $\mathbf{x}$ is not considered Gaussian, thus \eqref{eq:posterior_mfld} is rewritten as
\begin{IEEEeqnarray}{rCl}
	\label{eq:posterior mf nonlinear1}
	p(\mathbf{x}\mid\mathbf{y})
	&\propto& p_{pr}(\mathbf{x})\exp\!\Bigl\{\mathbf{x}^H\underbrace{\sigma_z^{-2}\mathbf{H}^H\mathbf{y}}_{\boldsymbol{\theta}_{p}}
	\;+\;\underbrace{\mathbf{y}^H\mathbf{H}\,\sigma_z^{-2}}_{\boldsymbol{\theta}_{p}^H}\mathbf{x}\notag\\
	&\;+\;&\mathbf{x}^H\underbrace{[-(\sigma_z^{-2}\mathbf{H}^H\mathbf{H})]}_{\boldsymbol{\Theta}_{p}}\mathbf{x}\Bigr\}.
\end{IEEEeqnarray}

To use the result of the CS-IGA, we rewrite \eqref{eq:posterior mf nonlinear1} as
\begin{IEEEeqnarray}{Cl}
	\label{eq:posterior mf nonlinear2}
	p(\mathbf{x}\mid\mathbf{y})\propto p_{pr}(\mathbf{x})p(\mathbf{x};\boldsymbol{\theta}_p,\boldsymbol{\Theta}_p),
\end{IEEEeqnarray}
where $p(\mathbf{x};\boldsymbol{\theta}_p,\boldsymbol{\Theta}_p)$ is denoted by
\begin{IEEEeqnarray}{Cl}
	p(\mathbf{x};\boldsymbol{\theta}_p,\boldsymbol{\Theta}_p) \propto \exp\left\{\mathbf{x}^H\boldsymbol{\theta}_p+\boldsymbol{\theta}_p^H\mathbf{x}+\mathbf{x}^H\boldsymbol{\Theta}_p\mathbf{x}\right\}.
\end{IEEEeqnarray}

 
We still apply the splitting in \eqref{eq:splitting}, but with $\boldsymbol{\theta}_{\mathrm{post}}$ and $\boldsymbol{\Theta}_{\mathrm{post}}$ changing to $\boldsymbol{\theta}_{p}$ and $\boldsymbol{\Theta}_{p}$ and $ \mathbf{K}=\sigma_z^{-2}\mathbf{H}^H\mathbf{H}$. The definitions of the auxiliary manifolds and the objective manifold also keep the same. However, we now need to find  fixed \(\boldsymbol{\lambda}_n\) and   \(\boldsymbol{\Lambda}_n\) to replace \(\sum_{m\neq n}\mathbf{b}_m\) and \(\sum_{m\neq n}\mathbf{C}_m\) and also the prior term in $ p(\mathbf{x}|\mathbf{y}) $  while keeping the mean and diagonal of covariance matrix unchanged.  For \(\boldsymbol{\lambda}_0\) and   \(\boldsymbol{\Lambda}_0\), the situation is similar.

Then, an extra auxiliary PDF is constructed as
\begin{IEEEeqnarray}{Cl}
	\label{eq:extra-PDF}
	p_e(\mathbf{x};\hat{\boldsymbol{\theta}}_0,\hat{\boldsymbol{\Theta}}_0)\propto p_{pr}(\mathbf{x})p(\mathbf{x};\hat{\boldsymbol{\theta}}_0,\hat{\boldsymbol{\Theta}}_0),
\end{IEEEeqnarray}
where $\hat{\boldsymbol{\theta}}_0$ and $\hat{\boldsymbol{\Theta}}_0$ are natural parameters. They are denoted as
\begin{IEEEeqnarray}{Cl} 
	\hat{\boldsymbol{\theta}}_0 &= \hat{\boldsymbol{\lambda}}_0=\sum_{n=1}^{N}\boldsymbol{\xi}_n,\IEEEyesnumber\IEEEyessubnumber\\
	\hat{\boldsymbol{\Theta}}_0 &= -\bigl(\hat{\boldsymbol{\Lambda}}_0 + {\mathbf{D}}\bigr)=-\left(\sum_{n=1}^{N}\boldsymbol{\Xi}_n+{\mathbf{D}}\right),\IEEEyessubnumber
\end{IEEEeqnarray}
where $\hat{\boldsymbol{\Theta}}_0$ is diagonal and ${\mathbf{D}}=\mathbf{I}\odot\mathbf{K}$. The corresponding extra auxiliary manifold is defined as
\begin{IEEEeqnarray}{Cl}
	\label{eq:extra-manifold}
	\mathcal{M}_e = \left\{p_e(\mathbf{x};\hat{\boldsymbol{\theta}}_0,\hat{\boldsymbol{\Theta}}_0)\right\}.
\end{IEEEeqnarray}

Since an extra auxiliary manifold is introduced, the $e$-condition in \eqref{eq:e-condition-natural} is rewritten as
\begin{IEEEeqnarray}{Cl}
	\label{eq:e-condition-natrual-NonLinear}
	\sum_{n=1}^{N}(\boldsymbol{\theta}_n,\boldsymbol{\Theta}_n)+(\hat{\boldsymbol{\theta}}_0,\hat{\boldsymbol{\Theta}}_0)-N(\boldsymbol{\theta}_0,\boldsymbol{\Theta}_0) = (\boldsymbol{\theta}_p,\boldsymbol{\Theta}_p),
\end{IEEEeqnarray}
which is equivalent to
\begin{IEEEeqnarray}{Cl}
	\label{eq:e-condition-exp-NonLinear}
	\sum_{n=1}^{N}(\boldsymbol{\lambda}_n,\boldsymbol{\Lambda}_n)+(\hat{\boldsymbol{\lambda}}_0,\hat{\boldsymbol{\Lambda}}_0)-N(\boldsymbol{\lambda}_0,\boldsymbol{\Lambda}_0) = \mathbf{0}.
\end{IEEEeqnarray}
The $m$-condition considered the extra auxiliary manifold is expressed as
\begin{IEEEeqnarray}{Cl}
	\label{eq: m-condition-NonLinear}
	(\boldsymbol{\mu}_n,\mathbf{I}\odot\boldsymbol{\Sigma}_n) =(\hat{\boldsymbol{\mu}}_0,\mathbf{I}\odot\hat{\boldsymbol{\Sigma}}_0)= (\boldsymbol{\mu}_0,\mathbf{I}\odot\boldsymbol{\Sigma}_0),
\end{IEEEeqnarray}
where $\hat{\boldsymbol{\mu}}_0$ and $\hat{\boldsymbol{\Sigma}}_0$ are the mean and variance of the extra auxiliary manifold, respectively.

The computations of $\boldsymbol{\xi}_n^t, \boldsymbol{\Xi}_n^t$ are the same as those in the CS-IGA steps, we only need to change the definition of $\mathbf{K}$.

The main task remain is then to $m$-project the extra PDF to the objective manifold. To do so, we need to calculate the \emph{a posteriori} probability of each symbol. For the $k$-th symbol, the \emph{a posteriori} distribution is written as
\begin{IEEEeqnarray}{Cl}
	p(x_k\mid\mathbf{y})\propto p_{pr,k}(x_k)p(\mathbf{x};\hat{\boldsymbol{\theta}}_0,\hat{\boldsymbol{\Theta}}_0),
\end{IEEEeqnarray}
where $p(\mathbf{x};\hat{\boldsymbol{\theta}_0},\hat{\boldsymbol{\Theta}}_0)$ is Gaussian, its mean and variance are denoted as
\begin{IEEEeqnarray}{Cl}
	\hat{\boldsymbol{\mu}}_0 &= -\hat{\boldsymbol{\Theta}}_0^{-1}\hat{\boldsymbol{\theta}}_0,\IEEEyesnumber\IEEEyessubnumber\\
	\hat{\boldsymbol{\Sigma}}_0 &= -\hat{\boldsymbol{\Theta}}_0^{-1}.\IEEEyessubnumber 
\end{IEEEeqnarray}
For each constellation point $\{x^{(\ell)}\}_{\ell=1}^L\in\mathbb{X}^L$, the \textit{a posteriori} probability is calculated as
\begin{IEEEeqnarray}{Cl}
	\label{eq:poster eta}
	p(x_k\mid\mathbf{y})_{x_k=x^{(\ell)}} &= p_{pr,k}(x_k)_{x_k=x^{(\ell)}}p(x^{(\ell)};\hat{\boldsymbol{\theta}}_0,\hat{\boldsymbol{\Theta}}_0),\notag\\
	&\propto\exp\left\{\frac{\Vert\hat{\mu}_{0,k}-x^{(\ell)}\Vert^2_2}{\hat{\Sigma}_{0,k}}\right\}.
\end{IEEEeqnarray}
Then, the probability of the $k$-th symbol on the $\ell$-th constellation point is denoted as
\begin{IEEEeqnarray}{Cl}
	\label{eq:symbol probability}
	\eta_{k,\ell} = \frac{\hat{\Sigma}_{0,k}^{-1}\Vert\hat{\mu}_{0,k}-x^{(\ell)}\Vert^2_2}{\sum_{x^{(m)}\in\mathbb{X}^K}\hat{\Sigma}_{0,k}^{-1}\Vert\hat{\mu}_{0,k}-x^{(m)}\Vert^2_2}.
\end{IEEEeqnarray}
For simplicity, we use $\boldsymbol{\eta}_k\in\mathbb{C}^{L\times 1}$ to denote the probability of the $k$-th symbol, where $\boldsymbol{\eta}_k = [\eta_{k,1},\eta_{k,2},\cdots,\eta_{k,L}]^T$ and $\sum_{\ell=1}^{L}{\eta}_{k,\ell}=1$. We then project this discrete probability to a Gaussian distribution whose mean and variance are calculated as
\begin{IEEEeqnarray}{Cl}
	\label{eq:mean and var of extra AM}
	\tilde{{\mu}}_{0,k} = \sum\boldsymbol{\eta}_k\odot\mathbf{c},\IEEEyesnumber\IEEEyessubnumber\label{eq:mean of the extra AM}\\
	\tilde{\Sigma}_{0,k} = \sum\boldsymbol{\eta}_k\odot(\mathbf{c}-\tilde{{\mu}}_{0,k})\odot(\mathbf{c}-\tilde{\mu}_{0,k})^*,\IEEEyessubnumber
\end{IEEEeqnarray}
where $\mathbf{c}$ denotes the complex constellation set. Let $\tilde{\boldsymbol{\mu}}_0 = [\tilde{\mu}_{0,1},\tilde{\mu}_{0,2},\cdots,\tilde{\mu}_{0,N}]^T$ and $\tilde{\boldsymbol{\Sigma}}_0 = [\tilde{\Sigma}_{0,1}, \tilde{\Sigma}_{0,2}, \cdots, \tilde{\Sigma}_{0,N}]^T$ denote the $m$-projected mean and variance of the extra auxiliary PDF, respectively. 

After obtaining the expectation parameters, the natural parameters of the objective PDF can be easily obtained by
\begin{IEEEeqnarray}{Cl}
	\label{eq:natrual param proj extra AM}
	{\boldsymbol{\lambda}}_0 &= \tilde{\boldsymbol{\Sigma}}_0^{-1}\tilde{\boldsymbol{\mu}}_0,\IEEEyesnumber\IEEEyessubnumber*\\
	{\boldsymbol{\Lambda}}_0 &= \tilde{\boldsymbol{\Sigma}}_0^{-1}.
\end{IEEEeqnarray}
By comparing $(\hat{\boldsymbol{\lambda}}_0,\hat{\boldsymbol{\Lambda}}_0)$ and $(\boldsymbol{\theta}_0,\boldsymbol{\Theta}_0)$, the beliefs corresponding to the prior PDF can be easily obtained by
\begin{IEEEeqnarray}{Cl}
	\label{eq:extra beliefs}
	\boldsymbol{\xi}_e &={\boldsymbol{\lambda}}_0-\hat{\boldsymbol{\lambda}}_0,  \IEEEyesnumber\IEEEyessubnumber*\\
	\boldsymbol{\Xi}_e &={\boldsymbol{\Lambda}}_0 -\hat{\boldsymbol{\Lambda}}_0 - {\mathbf{D}}.
\end{IEEEeqnarray}
Then, the parameters can be updated as
\begin{IEEEeqnarray}{Cl}
	\label{eq:para_upd NonLinear}
	&\hat{\boldsymbol{\lambda}}_0^{t+1} = \sum_{n=1}^{N}\boldsymbol{\xi}_n^t,\IEEEyesnumber\IEEEyessubnumber*\\
	&\hat{\boldsymbol{\Lambda}}_0^{t+1} = \sum_{n=1}^{N}\boldsymbol{\Xi}_n^{t},\\
	&\boldsymbol{\lambda}_n^{t+1} = \hat{\boldsymbol{\lambda}}_0^{t+1}+\boldsymbol{\xi}_e^t-\boldsymbol{\xi}_n^t,\\ &{\boldsymbol{\Lambda}}_n^{t+1} =  \hat{\boldsymbol{\Lambda}}_0^{t+1}+\boldsymbol{\Xi}_e^t-\boldsymbol{\Xi}_n^t,
\end{IEEEeqnarray}
where $t$ is the iteration number. Sometimes the algorithm will need the damping factor to converge, which can be expressed as
\begin{IEEEeqnarray}{Cl}
	\label{eq:para_upd NonLinear damp}
	&\hat{\boldsymbol{\lambda}}_0^{t+1} = \alpha\sum_{n=1}^{N}\boldsymbol{\xi}_n^t+(1-\alpha)\hat{\boldsymbol{\lambda}}_0^{t},\IEEEyesnumber\IEEEyessubnumber*\\
	&\hat{\boldsymbol{\Lambda}}_0^{t+1} = \alpha\sum_{n=1}^{N}\boldsymbol{\Xi}_n^{t}+(1-\alpha)\hat{\boldsymbol{\Lambda}}_0^{t},\\
	&\boldsymbol{\lambda}_n^{t+1} = \alpha(\sum_{m=1}^{N}\boldsymbol{\xi}_m^t+\boldsymbol{\xi}_e^t-\boldsymbol{\xi}_n^t)+(1-\alpha)\boldsymbol{\lambda}_n^{t},\\ &{\boldsymbol{\Lambda}}_n^{t+1} =  \alpha(\sum_{m=1}^{N}\boldsymbol{\Xi}_m^t+\boldsymbol{\Xi}_e^t-\boldsymbol{\Xi}_n^t)+(1-\alpha){\boldsymbol{\Lambda}}_n^{t},
\end{IEEEeqnarray}
where $\alpha$ is the damping factor.

In NCS‑IGA, we obtain, for each user \(k\), a posterior symbol‐wise probability vector $\boldsymbol{\eta}_k$ where
\[
\eta_{k,\ell}
= p\bigl(x_k = x^{(\ell)} \mid \mathbf{y}\bigr).
\]
Each symbol \(x^{(\ell)}\) is corresponding to \(B=\log_2L\) bits \(\bigl[b_1(x^{(\ell)}),\dots,b_B(x^{(\ell)})\bigr]\).

We now convert these symbol‐level probabilities into bit‐level log‐likelihood ratios (LLRs) for soft demodulation.  Let
\[
\mathcal{S}_i^b = \bigl\{\ell : b_i(x^{(\ell)})=b\bigr\},
\qquad
b\in\{0,1\},\quad i=1,\dots,B,
\]
be the subsets of constellation indices whose \(i\)-th bit equals \(b\).  Then the bit‐posterior probabilities satisfy
\begin{IEEEeqnarray}{Cl}
p\bigl(b_i = 0 \mid \mathbf{y}\bigr)
= \sum_{\ell\in\mathcal{S}_i^0}\eta_{k,\ell},\IEEEyesnumber\IEEEyessubnumber*\\
p\bigl(b_i = 1 \mid \mathbf{y}\bigr)
= \sum_{\ell\in\mathcal{S}_i^1}\eta_{k,\ell}.
\end{IEEEeqnarray}
The LLR for the \(i\)-th bit of user \(k\) is defined as
\begin{equation}
	\label{eq:LLR_from_posteriors}
	\mathrm{LLR}_k(i)
	= \ln
	\frac{p(b_i=0\mid\mathbf{y})}
	{p(b_i=1\mid\mathbf{y})}
	= \ln
	\frac{\displaystyle\sum_{\ell\in\mathcal{S}_i^0}\eta_{k,\ell}}
	{\displaystyle\sum_{\ell\in\mathcal{S}_i^1}\eta_{k,\ell}}.
\end{equation}


The NCS-IGA algorithm is summarized in \autoref{algorithmn_NCSIGA}.
\begin{algorithm}[!hbt]
	\KwData{Channel matrix $\mathbf{H}$, received signal $\mathbf{y}$, the constellation set $\mathbf{c}$, damping factor $\alpha$, noise power $\sigma_z^2$, maximum iteration number $T$.}
	\KwResult{The LLR}
	\SetKwProg{Fn}{Function}{:}{}
	\Fn{NCS-IGA ($\mathbf{H},\mathbf{y},\mathbf{c},\alpha,\sigma_z^2,T$)}{
		$\boldsymbol{\lambda}_n \gets\mathbf{0}, \boldsymbol{\Lambda}_n \gets \mathbf{-1}$\;
		
		\For{$t \gets 1$ \textbf{to} $T$}{
			Calculate ${{\mu}}_n$ for all auxiliary manifolds as \eqref{eq: mu_q} \;
			Calculate the beliefs for all auxiliary manifolds as \eqref{eq: beliefs 2}\;
			Calculate the symbol probability as \eqref{eq:symbol probability}\;
			Calculate the beliefs for the extra auxiliary manifold as \eqref{eq:extra beliefs}\;
			Update the parameters as \eqref{eq:para_upd NonLinear damp}\;
		}
		Calculate the LLR as \eqref{eq:LLR_from_posteriors}\;
		\KwRet{$\mathrm{LLR}$}\;
	}
	\caption{NCS-IGA for Signal Detection}
	\label{algorithmn_NCSIGA}
\end{algorithm}

\subsection{Complexity Analysis}

The overall complexity of NCS‑IGA consists of two parts: the CS‑IGA backbone and the extra non‑linear marginalization steps.

\textbf{CS‑IGA backbone.}  As shown in Section \ref{CS-IGA Linear}, the core CS‑IGA iterations computing \(\mu_n\), \(\boldsymbol{\xi}_n\), \(\boldsymbol{\Xi}_n\) and updating \(\boldsymbol{\lambda}_n,\boldsymbol{\Lambda}_n\) incur 
\(
\mathcal{O}(TN^2)
\)
operations over \(T\) iterations and \(N\) auxiliary manifolds.

\textbf{Extra steps per iteration.}  For each user \(k=1,\dots,N\):
\begin{enumerate}
	\item \emph{posterior symbol probability computation} via \eqref{eq:symbol probability}:
	\[
	\{\eta_{k,1},\dots,\eta_{k,L}\}
	\;=\;
	p_{pr,k}(x^{(\ell)})\,p(x^{(\ell)};\hat{\boldsymbol{\theta}}_0,\hat{\boldsymbol{\Theta}}_0)
	\quad\forall \ell,
	\]
	which takes \(\mathcal{O}(L)\) evaluations of the Gaussian factor.
	\item \emph{\(m\)‑projection of extra auxiliary manifold} via \eqref{eq:mean and var of extra AM}:
	\[
	\tilde{\mu}_{0,k}
	= \sum_{\ell=1}^L \eta_{k,\ell}\,c_\ell,
	\quad
	\tilde{\Sigma}_{0,k}
	= \sum_{\ell=1}^L \eta_{k,\ell}\,\bigl|c_\ell-\tilde{\mu}_{0,k}\bigr|^2,
	\]
	which is also \(\mathcal{O}(L)\).
\end{enumerate}
Hence the extra cost per user  is \(\mathcal{O}(L)\), and for all \(N\) users \(\mathcal{O}(N\,L)\).  With \(T\) iterations, this becomes
\(
\mathcal{O}(T\,N\,L).
\)

Combining these, the total complexity of NCS-IGA is $\mathcal{O}(TN^2+TNL)$.

\noindent \textbf{Comparison with other algorithms:}
\begin{itemize}
	\item \emph{Bayes‑Optimal AMP} with discrete priors  has per‐iteration cost \(\mathcal{O}(MN + NL)\), where the \(MN\) term arises from linear steps and \(NL\) from extra steps in nonlinear scheme.  When \(M\gg N\), which is typical in XL‑MIMO, the complexity of the NCS‑IGA is significantly lower.
	\item \emph{Traditional IGA} \cite{IGSiDet} for discrete priors needs \(\mathcal{O}(MNL)\) operations per iteration, since each auxiliary update involves both matrix operations \(MN\) and summing over \(L\) symbols.  Again, the complexity of NCS‑IGA is lower when \(M\) is large.
	\item \emph{EP} has per iteration costs \(\mathcal{O}(N^3 + \,L\,N)\) \cite{EP}, dominated by the cubic matrix inversion.  Even the complexity of EP with a single iteration exceeds that of the LMMSE, whereas NCS‑IGA remains near \(\mathcal{O}(N^2)\) per iteration.
\end{itemize}

Thus, among current non‑linear detectors for XL‑MIMO, NCS‑IGA offers the lowest computational complexity.

%
%
%

\section{Simulation Results}\label{sim}
\subsection{Simulation Settings}

All simulations employ the QuaDriGa \cite{quadriga} channel generator under the ``3GPP-38.901-UMA–NLOS'' scenario. 
We consider a uniform planar array (UPA) at the BS with $M=N_h\times N_v = 32\times16=512$ antennas, severing $128$ single-antenna users. Users are distributed randomly within a $120^\circ$ azimuth sector of radius 200 meters around the BS, whose coordinates are $(0,0,25)$ meters. The large‐scale path loss and spatial correlation follow the 3GPP specified parameters for urban microcells with non‐line‐of‐sight.

We employ a center frequency of $f_c = 6.7$ GHz, $N_c = 2048$ subcarriers, subcarrier spacing $\Delta f = 30$ kHz, and a cyclic prefix length of $M_g = 144$ for the OFDM system. The channel matrix is normalized such that $\mathbb{E}[\|\mathbf{H}\|_F^2]=M$, and each transmitted symbol has unit average power.

We evaluate both linear and non‐linear detectors over coded data.  The channel code is an LDPC code from the 5G standard, with rate \(R=3/4\) for the linear CS‑IGA receiver and \(R=1/2\) for the non‐linear NCS‑IGA receiver.  We test QPSK, 16‑QAM, and 64‑QAM modulation.  User scheduling selects the best \(N=128\) users out of 512 candidates based on instantaneous channel correlation matrix $\mathbf{H}^H\mathbf{H}$.

Table \ref{tab:SimPara} summarizes the key system and channel parameters.

\begin{table}[htbp]
	\centering
	\caption{Parameter Setting of the QuaDriGa Channel Model}
	\label{tab:SimPara}
	\renewcommand\arraystretch{0.75}
	\begin{tabular}{lc}
		\hline
		Parameter & Value\\
		\hline
		Number of BS antennas \(M = N_h \times N_v\) 
		& \(512 = 32 \times 16\) \\
		Number of user antennas \(N\) 
		& 128 \\
		Modulation schemes 
		& QPSK, 16‑QAM, 64‑QAM \\
		Coding scheme 
		& LDPC (5G standard) \\
		Code rates 
		& \(R = 3/4\) , \(1/2\) \\
		Center frequency \(f_c\) 
		& 6.7 GHz \\
		Number of subcarriers \(N_c\) 
		& 2048 \\
		Subcarrier spacing \(\Delta f\) 
		& 30 kHz \\
		Cyclic prefix length \(M_g\) 
		& 144 \\
		Scheduling pool size 
		& 512 users \\
		\hline
	\end{tabular}
\end{table}

\subsection{Performance of the CS-IGA}
We first present the convergence and BER performance of CS‑IGA compared to the Bayes‑optimal AMP and the conventional IGA in the linear detection scheme. The modulation formats are 16‑QAM and 64‑QAM. All methods employ soft demodulation followed by LDPC decoding. Since user scheduling is performed before detection, the channel correlation matrix \(\mathbf{H}^H\mathbf{H}\) is well conditioned; these iterative algorithms therefore require only a few iterations to converge.

For the 16‑QAM case, \autoref{fig:iciter10db16qam} shows the convergence behavior of CS‑IGA at a signal‑to‑noise ratio (SNR) of 13~dB.From this figure, we can observe that CS‑IGA converges in 4 iterations to achieve the lowest BER, while AMP and IGA require 6 and 7 iterations, respectively. At BER \(=10^{-3}\), CS‑IGA reduces the required number of iterations by one compared to IGA. \autoref{fig:ic16qamiter4} plots the BER performance of the three detectors when each is limited to 4 iterations. CS‑IGA achieves performance essentially identical to LMMSE detection within these 4 iterations. At BER \(=10^{-5}\), CS‑IGA achieves SNR gains of approximately 1.5 dB over AMP and 2.2 dB over IGA.

%
\begin{figure}[h]
	\centering
	\includegraphics[width=0.93\linewidth]{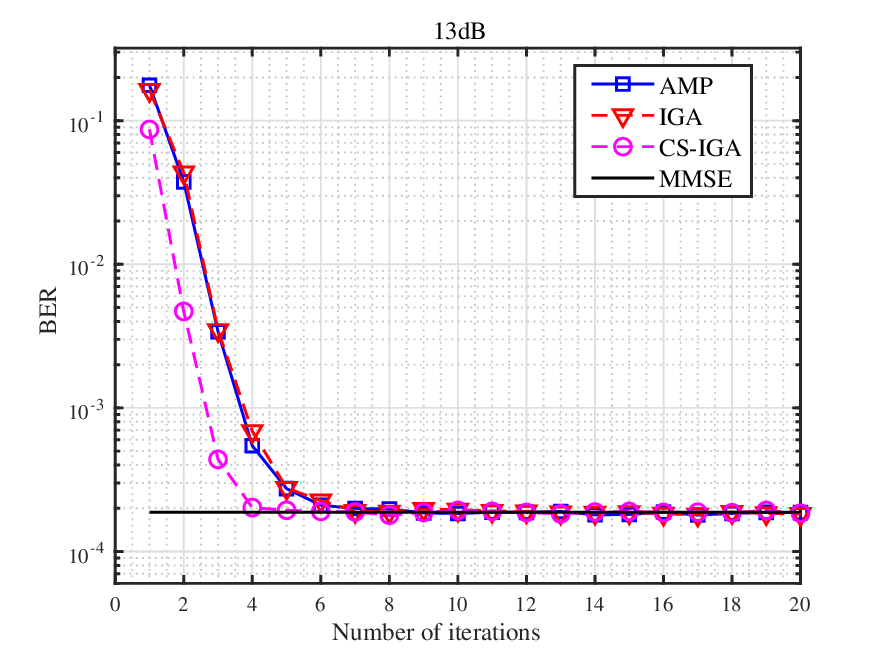}
	\caption{BER versus iteration number for different approaches at SNR=13dB  under 16QAM modulation.}
	\label{fig:iciter10db16qam}
\end{figure}
\begin{figure}[h]
	\centering
	\includegraphics[width=0.93\linewidth]{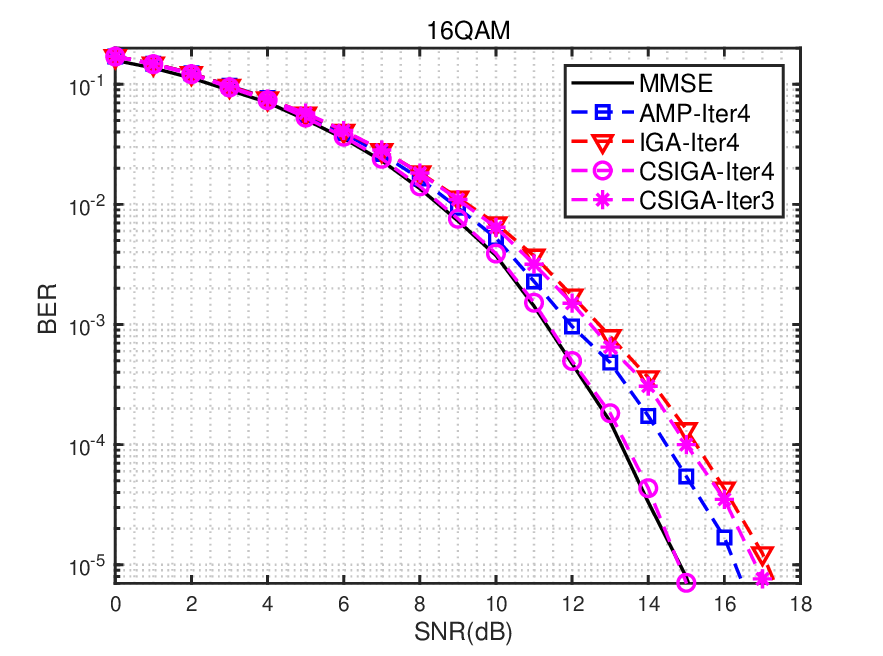}
	\caption{BER versus SNR for different approaches at 3 and 4 iterations under 16QAM modulation.}
	\label{fig:ic16qamiter4}
\end{figure}

Under 64‑QAM modulation, \autoref{fig:iciter14db64qam} and \autoref{fig:ic64qamiter5} show the convergence and BER performance of the three detectors. From \autoref{fig:iciter14db64qam}, CS‑IGA converges faster than AMP and IGA. 
At BER $=10^{-3}$, it requires one fewer iteration than either AMP or IGA. In \autoref{fig:ic64qamiter5}, with the same iteration count, CS‑IGA achieves the best BER performance and most closely matches the MMSE bound. Furthermore, the BER of CS‑IGA at 5 iterations outperforms that of AMP and IGA at 6 iterations. 
At BER $=10^{-5}$, CS-IGA gains approximately 1.2 dB over AMP and 2 dB over IGA. We also observe that, compared to 16‑QAM, all iterative methods require more iterations to converge under 64‑QAM.
These linear‑scheme results were first reported in our conference paper.

\begin{figure}[h]
	\centering
	\includegraphics[width=0.93\linewidth]{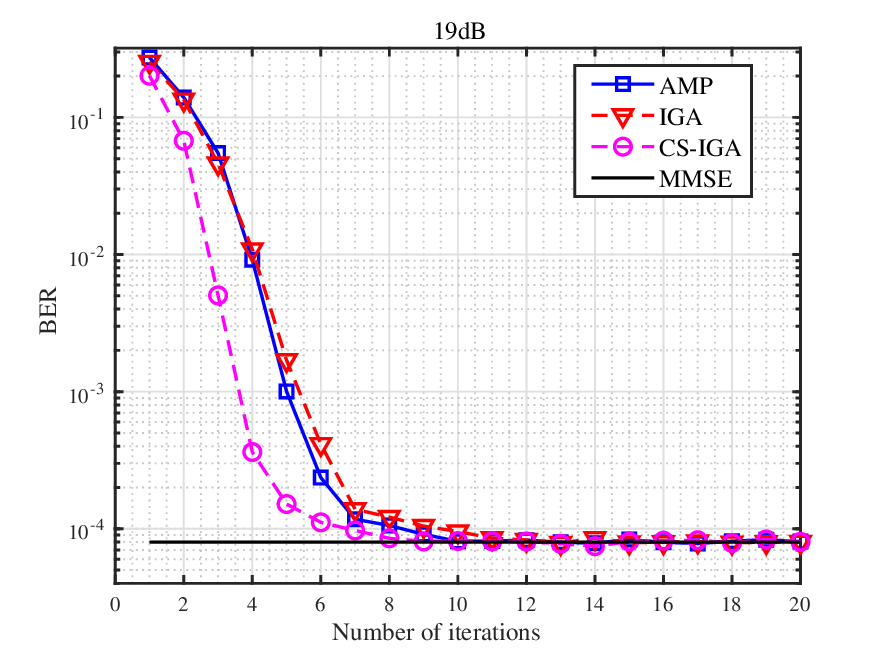}
	\caption{BER versus iteration number for different approaches at SNR=19dB  under 64QAM modulation.}
	\label{fig:iciter14db64qam}
\end{figure}
\begin{figure}[h]
	\centering
	\includegraphics[width=0.93\linewidth]{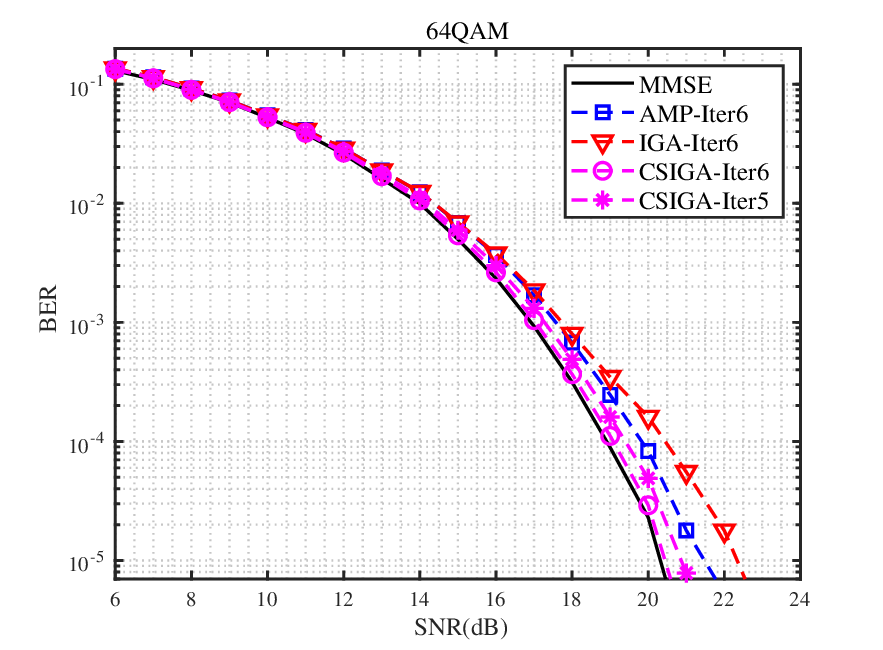}
	\caption{BER versus SNR for different approaches at 5 and 6 iterations under 64QAM modulation.}
	\label{fig:ic64qamiter5}
\end{figure}

\subsection{Performance of the NCS-IGA}
Next, we present the convergence and BER performance of NCS‑IGA compared to the Bayes‑optimal AMP, the conventional IGA, and EP in the non‑linear detection scheme. The modulation formats are QPSK, 16‑QAM, and 64‑QAM. All methods compute LLRs and then perform LDPC decoding.
\autoref{fig:niciter5db_QPSK} shows the convergence behavior of NCS-IGA at SNR = 5 dB under QPSK.
From this figure, we observe that both NCS‑IGA and EP converge in 2 iterations and achieve the lowest BER, while AMP and IGA require about 3 iterations. 
At BER = $10^{-5}$, NCS‑IGA saves one iteration compared to AMP and IGA. Notably, all of these iterative non‑linear detectors outperform the LMMSE bound. 
\autoref{fig:nic_QPSK_iter2} shows the BER versus SNR at 1 and 2 iterations: at the same iteration counts, NCS‑IGA and EP deliver the best BER performance, and at BER = $10^{-5}$ NCS‑IGA gains approximately 0.7 dB over AMP and IGA.
\begin{figure}[h]
	\centering
	\includegraphics[width=0.93\linewidth]{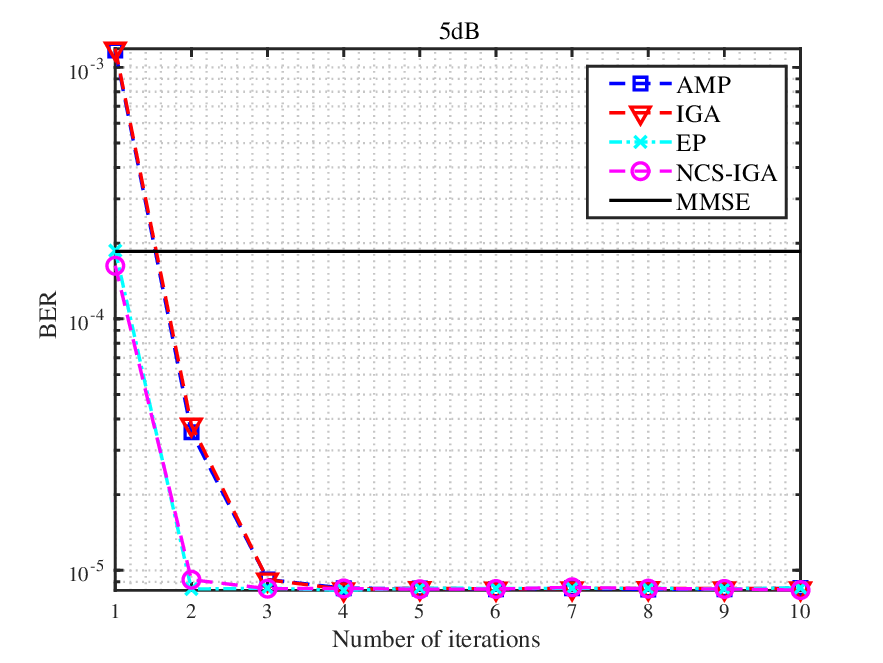}
	\caption{BER versus iteration number for different approaches at SNR=5dB  under QPSK modulation.}
	\label{fig:niciter5db_QPSK}
\end{figure}
\begin{figure}[h]
	\centering
	\includegraphics[width=0.93\linewidth]{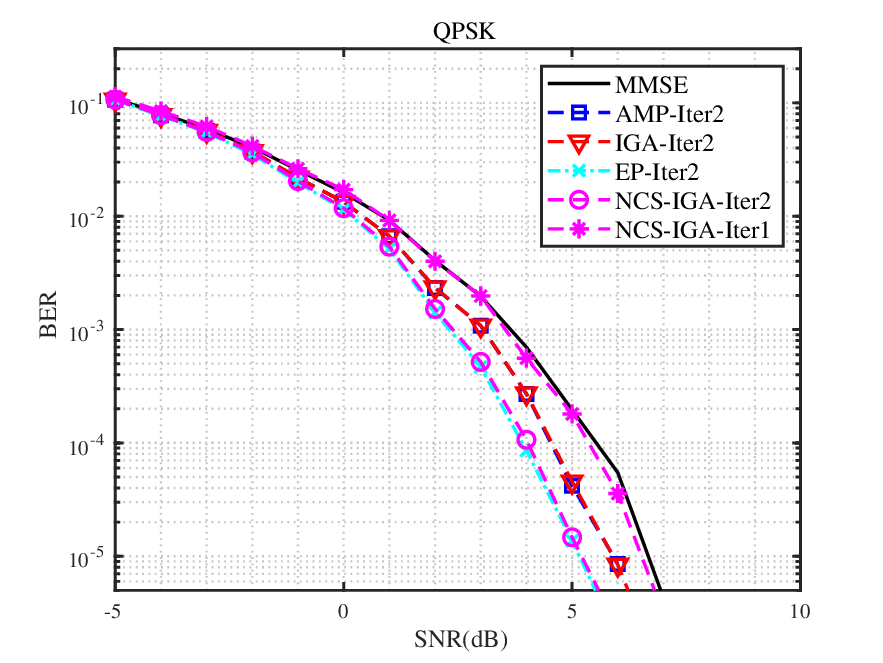}
	\caption{BER versus SNR for different approaches at 2 and 1 iterations under QPSK modulation.}
	\label{fig:nic_QPSK_iter2}
\end{figure}
The convergence and BER performance of NCS-IGA under 16-QAM modulation are shown in \autoref{fig:niciter11db16qam} and \autoref{fig:nic16qamiter4}.
In \autoref{fig:niciter11db16qam}, EP converges in 2 iterations, while NCS‑IGA, IGA, and Bayes‑optimal AMP each require about 5 iterations. For iteration counts below 5, NCS‑IGA achieves a clear BER advantage over both AMP and IGA. 
\autoref{fig:nic16qamiter4} shows that, at the same iteration count, NCS‑IGA have comparable BER performance with AMP and IGA.

\begin{figure}[h]
	\centering
	\includegraphics[width=0.93\linewidth]{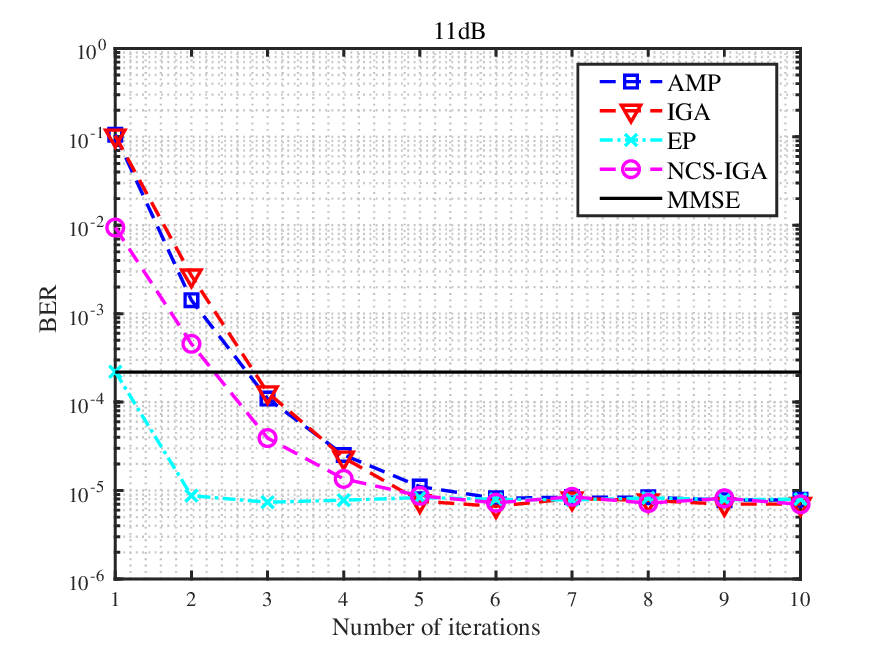}
	\caption{BER versus iteration number for different approaches at SNR=11dB  under 16-QAM modulation.}
	\label{fig:niciter11db16qam}
\end{figure}
\begin{figure}[h]
	\centering
	\includegraphics[width=0.93\linewidth]{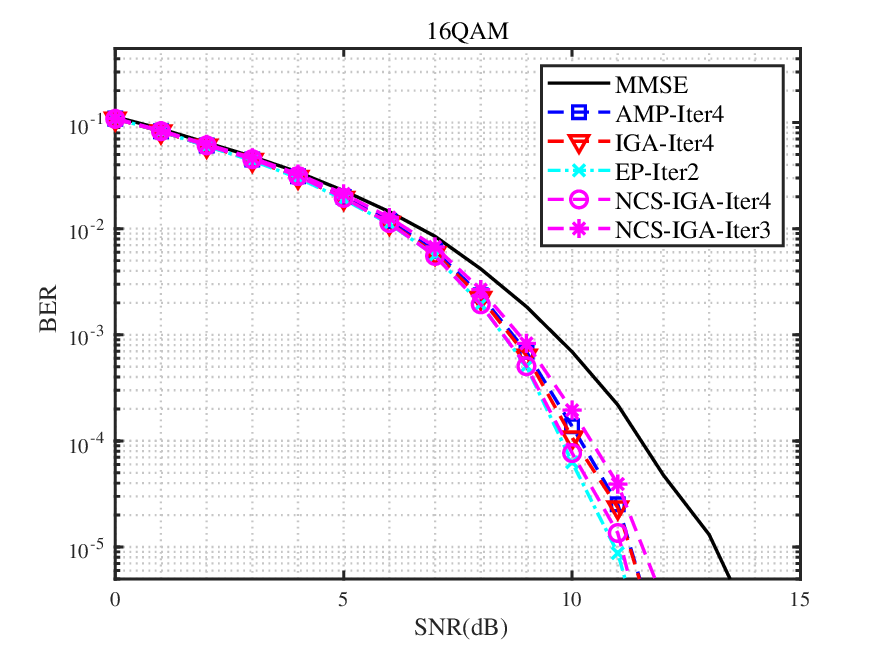}
	\caption{BER versus SNR for different approaches at 4 and 3 iterations under 16-QAM modulation.}
	\label{fig:nic16qamiter4}
\end{figure}

Under 64-QAM modulation (\autoref{fig:niciter16db64qam} and \autoref{fig:nic64qamiter6}), EP converges in approximately 4 iterations, whereas NCS-IGA, IGA, and AMP each need around 9 iterations.
In \autoref{fig:nic64qamiter6}, at low SNR all four methods yield nearly identical BER; at high SNR, AMP converges slightly faster than NCS-IGA and IGA.
At BER $=10^{-5}$ and 6 iterations, NCS-IGA incurs about a 0.5 dB loss relative to AMP.
\begin{figure}[h]
	\centering
	\includegraphics[width=0.93\linewidth]{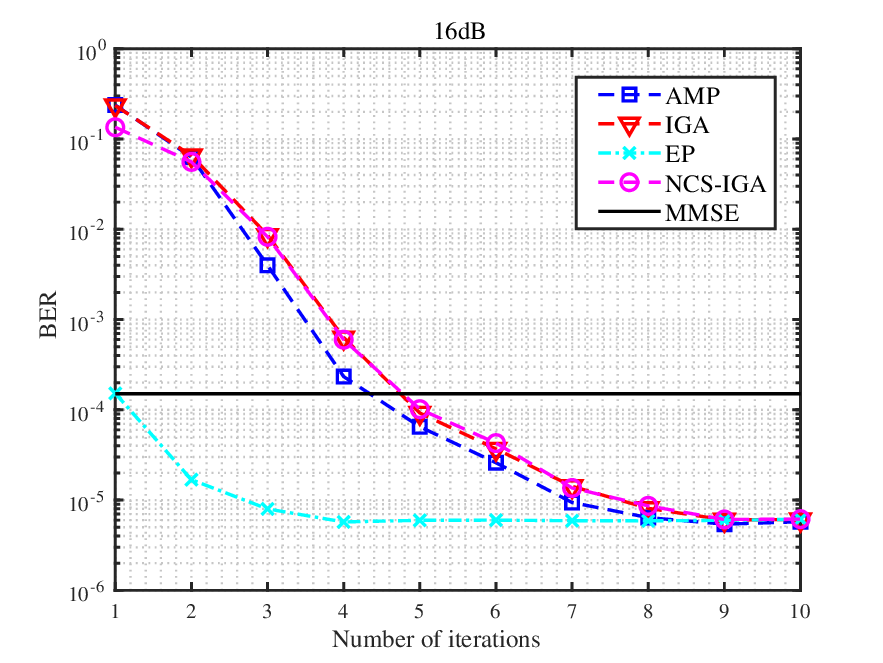}
	\caption{BER versus iteration number for different approaches at SNR=16dB  under 64-QAM modulation.}
	\label{fig:niciter16db64qam}
\end{figure}
\begin{figure}[h]
	\centering
	\includegraphics[width=0.93\linewidth]{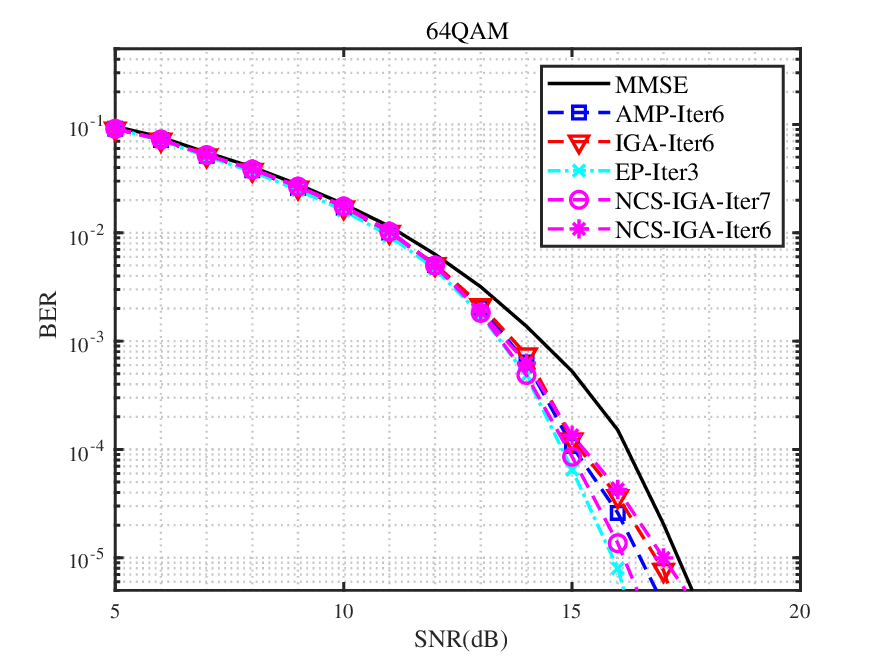}
	\caption{BER versus SNR for different approaches at 7 and 6 iterations under 64-QAM modulation.}
	\label{fig:nic64qamiter6}
\end{figure}

\section{Conclusion}\label{conclusion}
In this paper, we have addressed the critical challenge of high-complexity signal detection in uplink XL-MIMO systems. We introduced a novel iterative detector, the CS-IGA, designed to achieve both low computational cost and rapid convergence.
We further extended our method to handle discrete constellations in the nonlinear CS-IGA variant NCS-IGA. By integrating the symbol-wise moment matching directly into the geometric framework, NCS-IGA efficiently approximates the marginal \emph{a posteriori} probabilities without requiring external processing loops. Our simulation results, conducted using realistic 3GPP channel models, have validated the theoretical benefits of our approach. Both CS-IGA and NCS-IGA consistently demonstrated faster convergence and matched or exceeded the BER performance of state-of-the-art AMP and IGA detectors, while operating at a fraction of the computational cost.

\appendices
\section{Proof of Theorem 1}\label{Appendix A}
From the relationship between $\boldsymbol{\Theta}_n$ and $\boldsymbol{\Lambda}_n$ in \eqref{eq: natural_para_AM} , and the transform from natural parameters to expectation parameters in \eqref{eq:legendre1} and \eqref{eq:legendre2}, we have the following equation
\begin{IEEEeqnarray}{Cl}
	\mathbf{P}_{1n}\boldsymbol{\Sigma}_n\mathbf{P}_{1n}^H&=-\mathbf{P}_{1n}\boldsymbol{\Theta}_n^{-1}\mathbf{P}_{1n}^H,\notag\\
	&=(\mathbf{P}_{1n}(\boldsymbol{\Lambda}_n+\mathbf{C}_n+\mathbf{D})\mathbf{P}_{1n}^H)^{-1},\notag\\
	&= \left(\begin{array}{cc}
		d_n+\Lambda_n & \bar{\mathbf{k}}_n^H\\
		\bar{\mathbf{k}}_n&\bar{\mathbf{D}}_{ n}+\bar{\boldsymbol{\Lambda}}_{n}
	\end{array}\right)^{-1},
\end{IEEEeqnarray}
Using the block matrix inversion lemma, the inversion is calculated as
\begin{IEEEeqnarray}{Cl}
	&\mathbf{P}_{1n}\boldsymbol{\Sigma}_n\mathbf{P}_{1n}^H =  \left[\begin{array}{cc}
		r_n & \mathbf{m}_n^H \\ 
		\mathbf{m}_n  & \mathbf{J}_n
	\end{array}\right],
\end{IEEEeqnarray}
where $r_n$ and $\mathbf{m}_n$ are denoted by
\begin{IEEEeqnarray}{Cl}
	& r_n = (\Lambda_n+d_n-\bar{\mathbf{k}}_n^H\check{\boldsymbol{\Lambda}}_n\bar{\mathbf{k}}_n)^{-1},
	\IEEEyessubnumber\\
	& \mathbf{m}_n = -{r_n}\check{\boldsymbol{\Lambda}}_n\bar{\mathbf{k}}_n,
	\IEEEyessubnumber
\end{IEEEeqnarray}
and $\mathbf{J}_n$ is expressed as
\begin{IEEEeqnarray}{Cl}
	\mathbf{J}_n = \left(\check{\boldsymbol{\Lambda}}_n^{-1}-\frac{1}{(\Lambda_n+d_n)}\bar{\mathbf{k}}_n\bar{\mathbf{k}}_n^H\right)^{-1}.
\end{IEEEeqnarray}
Since $\bar{\mathbf{k}}_n\bar{\mathbf{k}}_n^H$ is a rank-1 matrix, the inversion can be easily obtained by Sherman-Morrison formula as
\begin{IEEEeqnarray}{Cl}
	\mathbf{J}_n &= \check{\boldsymbol{\Lambda}}_n+\frac{\frac{1}{(\Lambda_n+d_n)}\check{\boldsymbol{\Lambda}}_n\bar{\mathbf{k}}_n\bar{\mathbf k}_n^H\check{\boldsymbol{\Lambda}}_n}{1-\frac{1}{(\Lambda_n+d_n)}\bar{\mathbf{k}}_n^H\check{\boldsymbol{\Lambda}}_n\bar{\mathbf k}_n},\notag\\
	&=\check{\boldsymbol{\Lambda}}_n+{r_n}\check{\boldsymbol{\Lambda}}_n\bar{\mathbf{k}}_n\bar{\mathbf{k}}_n^H\check{\boldsymbol{\Lambda}}_n.
\end{IEEEeqnarray} 
Then, $\mathbf{P}_{1n}\boldsymbol{\Sigma}_n\mathbf{P}_{1n}^H$ can be written as
\begin{IEEEeqnarray}{Cl}
	\mathbf{P}_{1n}\boldsymbol{\Sigma}_n\mathbf{P}_{1n}^H =  \left[\begin{array}{cc}
		r_n & \mathbf{m}_n^H \\ 
		\mathbf{m}_n  & \check{\boldsymbol{\Lambda}}_n+{r_n}\check{\boldsymbol{\Lambda}}_n\bar{\mathbf{k}}_n\bar{\mathbf{k}}_n^H\check{\boldsymbol{\Lambda}}_n
	\end{array}\right].
\end{IEEEeqnarray}

Then we calculate $\mathbf{P}_{1n}\boldsymbol{\mu}_n$ based on $\mathbf{P}_{1n}\boldsymbol{\Sigma}_n\mathbf{P}_{1n}^H$. Based on \eqref{eq:legendre2}, $\boldsymbol{\mu}_n$ is easily obtained by
\begin{IEEEeqnarray}{Cl}
	\mathbf{P}_{1n}\boldsymbol{\mu}_n &=\mathbf{P}_{1n}\boldsymbol{\Sigma}_n(\boldsymbol{\lambda}_n+\mathbf{b}_n),\notag\\
	&= (\boldsymbol{\Lambda}_n+\mathbf{D}+\mathbf{C}_n)^{-1}(\boldsymbol{\lambda}_n+\mathbf{b}_n),\notag\\
	&= \begin{bmatrix}
		r_n&\mathbf{m}_n^H\\
		\mathbf{m}_n&\mathbf{J}_n
	\end{bmatrix}\begin{bmatrix}
	\lambda_n+\frac{1}{\sigma_z^2}\mathbf{h}_n^H\mathbf{y}\\
	\bar{\boldsymbol{\lambda}}_n
	\end{bmatrix}.
\end{IEEEeqnarray}
The $n-$th element of $\boldsymbol{\mu}_n$ is calculated as
\begin{IEEEeqnarray}{Cl}
	\mu_n &= r_n(\lambda_{ n}+r_n\frac{1}{\sigma_z^2}\mathbf{h}_n^H\mathbf{y})+\mathbf{m}_n^H\bar{\boldsymbol{\lambda}}_n,\notag\\
	&= \frac{r_n}{\sigma_z^2}\mathbf{h}_n^H\mathbf{y}+r_n\lambda_n-r_n\bar{\mathbf{k}}_n^H\check{\boldsymbol{\Lambda}}_n^H\bar{\boldsymbol{\lambda}}_n.
\end{IEEEeqnarray}
Notice that $\check{\boldsymbol{\Lambda}}_n$ is a real diagonal matrix, so $\check{\boldsymbol{\Lambda}}_n^H = \check{\boldsymbol{\Lambda}}_n$. For simplicity, we define $v_n = \bar{\mathbf{k}}_n^H\check{\boldsymbol{\Lambda}}_n\bar{\boldsymbol{\lambda}}_n$, then $\mu_n$ is rewritten as
\begin{IEEEeqnarray}{Cl}
	\mu_n = r_n\frac{1}{\sigma_z^2}\mathbf{h}_n^H\mathbf{y}+r_n\lambda_n-r_nv_n.
\end{IEEEeqnarray}
The rest elements $\bar{\boldsymbol{\mu}}_n$ are calculated as
\begin{IEEEeqnarray}{Cl}
	\hspace{-10pt}
	\bar{\boldsymbol{\mu}}_n &= \mathbf{m}_n(\lambda_n+\frac{1}{\sigma_z^2}\mathbf{h}_n^H\mathbf{y})+\mathbf{J}_n\bar{\boldsymbol{\lambda}}_n,\notag\\
	&= -r_n\frac{1}{\sigma_z^2}\check{\boldsymbol{\Lambda}}_n\bar{\mathbf{k}}_n\mathbf{h}_n^H\mathbf{y}-r_n\check{\boldsymbol{\Lambda}}_n\bar{\mathbf{k}}_n\lambda_n\notag\\
	&~~+ \check{\boldsymbol{\Lambda}}_n\bar{\boldsymbol{\lambda}}_n+r_n\check{\boldsymbol{\Lambda}}_n\bar{\mathbf{k}}_n\bar{\mathbf{k}}_n^H\check{\boldsymbol{\Lambda}}_n\bar{\boldsymbol{\lambda}}_n,\notag\\
	&=-\check{\boldsymbol{\Lambda}}_n\bar{\mathbf{k}}_n(\frac{r_n}{\sigma_z^2}\mathbf{h}_n^H\mathbf{y}+r_n\lambda_n-r_n\bar{\mathbf{k}}_n^H\check{\boldsymbol{\Lambda}}_n\bar{\boldsymbol{\lambda}}_n)+\check{\boldsymbol{\Lambda}}_n\bar{\boldsymbol{\lambda}}_n,\notag\\
	&= -\mu_n\check{\boldsymbol{\Lambda}}_n\bar{\mathbf{k}}_n+\check{\boldsymbol{\Lambda}}_n\bar{\boldsymbol{\lambda}}_n.
\end{IEEEeqnarray}
Then $\mathbf{P}_{1n}\boldsymbol{\mu}_n$ is denoted as
\begin{IEEEeqnarray}{Cl}
	\mathbf{P}_{1n}\boldsymbol{\mu}_n=\begin{bmatrix}
		\mu_n\\
		\bar{\boldsymbol{\mu}}_n
	\end{bmatrix}=\left[\begin{array}{c}
	\frac{r_n}{\sigma_z^2}\mathbf{h}_n^H\mathbf{y}+r_n\lambda_n-r_nv_n\\
	-\mu_n\check{\boldsymbol{\Lambda}}_n\bar{\mathbf{k}}_n+\check{\boldsymbol{\Lambda}}_n\bar{\boldsymbol{\lambda}}_{n}
	\end{array}\right].
\end{IEEEeqnarray}
This complete the proof.

\section{Proof of Theorem 2}\label{Appendix B}
Upon convergence, the algorithm satisfies both the \(e\)- and \(m\)-conditions, the relationship of expectation parameters at the fixed point is given by \eqref{eq: fixed point mean} and \eqref{eq: fixed point var}, which leads to
\begin{IEEEeqnarray}{Cl}
	\label{eq: fixed lambda}
	\sum_{n=1}^N\boldsymbol{\lambda}_n^{\bullet} &= (N-1)\boldsymbol{\lambda}_0^{\bullet},
	\IEEEyesnumber\IEEEyessubnumber\\
	\sum_{n=1}^{N}\boldsymbol{\Lambda}_n^{\bullet}&=(N-1)\boldsymbol{\Lambda}_0^{\bullet}.
	\IEEEyessubnumber
\end{IEEEeqnarray}
Considering $\boldsymbol{\mu}_n$ satisfies
\begin{IEEEeqnarray}{Cl}
	\boldsymbol{\mu}_n &= -\boldsymbol{\Theta}_n^{-1}\boldsymbol{\theta}_n\notag\\ &=(\boldsymbol{\Lambda}_n+\mathbf{C}_n+\mathbf{D})^{-1}(\boldsymbol{\lambda}_n+\mathbf{b}_n).
\end{IEEEeqnarray}
By taking sum over natural parameters of all AMs, we have the following property 
\begin{IEEEeqnarray}{Cl}
	\label{eq: sum lambda}
	\sum_{n=1}^{N}\boldsymbol{\lambda}_n^{\bullet}&=\sum_{n=1}^{N}(\boldsymbol{\Lambda}_n^{\bullet}+\mathbf{C}_n+\mathbf{D})\boldsymbol{\mu}^{\bullet}-\sum_{n=1}^{N}\mathbf{b}_n,\notag\\
	&= \left(\sum_{n=1}^{N}\boldsymbol{\Lambda}_n^{\bullet}+\mathbf{K}-\mathbf{I}\odot\mathbf{K}+\mathbf{D}\right)\boldsymbol{\mu}^{\bullet}-\mathbf{H}^H\mathbf{y},\notag\\
	&=\left(\sum_{n=1}^{N}\boldsymbol{\Lambda}_n^{\bullet}+\mathbf{K}+\mathbf{I}\right)\boldsymbol{\mu}^{\bullet}-\mathbf{H}^H\mathbf{y}
\end{IEEEeqnarray}
where $\mathbf{K}=\sigma_z^{-2}\mathbf{H}^H\mathbf{H}$. Substituting \eqref{eq: fixed lambda} into \eqref{eq: sum lambda}, we have
\begin{IEEEeqnarray}{Cl}
	\label{eq: N-1 lamdba}
	(N-1)\boldsymbol{\lambda}_0^{\bullet}=(N-1)\boldsymbol{\Lambda}_0^{\bullet}\boldsymbol{\mu}^{\bullet}+(\mathbf{K}+\mathbf I)\boldsymbol{\mu}^{\bullet}-\mathbf{H}^H\mathbf{y}.
\end{IEEEeqnarray}
Since $\boldsymbol{\mu}^{\bullet}=(\boldsymbol{\Lambda}_0^{\bullet})^{-1}\boldsymbol{\lambda}^{\bullet}$, \eqref{eq: N-1 lamdba} can be rewritten as
\begin{IEEEeqnarray}{Cl}
	\mathbf{H}^H\mathbf{y} = (\mathbf{K}+\mathbf I)\boldsymbol{\mu}^{\bullet}.
\end{IEEEeqnarray}
Then $\boldsymbol{\mu}^{\bullet}$ is obtained by 
\begin{IEEEeqnarray}{Cl}
	\boldsymbol{\mu}^{\bullet} &= (\mathbf{K} +\mathbf{I})^{-1}\mathbf{H}^H\mathbf y,
	\notag\\
	&=(\sigma_z^{-2}\mathbf{H}^H\mathbf{H}+\mathbf{I})^{-1}\mathbf{H}^H\mathbf{y},
\end{IEEEeqnarray}
which is equivalent to the \textit{a posteriori} mean in \eqref{eq:posterior mean} or the LMMSE mean. This complete the proof.
\small
\bibliographystyle{IEEEtran}
\bibliography{IEEEabrv,reference}

%

\end{document}